\definecolor{webgreen}{rgb}{0,.5,0}
\definecolor{webbrown}{rgb}{.6,0,0}
\newcommand{\pal}{\textsc{Pal}}
\newcommand{\maxpal}{\textsc{MaxPal}}
\newcommand{\factoreq}{\textsc{FactorEq}}
\newcommand{\occurs}{\textsc{Occurs}}
\newcommand{\border}{\textsc{Border}}
\newcommand{\predin}{\textsc{In}}
\newcommand{\subs}{\textsc{Subs}}
\newcommand{\rich}{\textsc{Rich}}
\newcommand{\priv}{\textsc{Priv}}
\newcommand{\privtwo}{\textsc{Priv}'}
\newcommand{\closed}{\textsc{Closed}}
\newcommand{\rtsp}{\textsc{RtSp}}
\newcommand{\minr}{\textsc{MinRt}}
\newcommand{\unrepsuf}{\textsc{UnrepSuf}}
\newcommand{\minunrepsuf}{\textsc{MinUnrepSuf}}
\newcommand{\trap}{\textsc{Trap}}
\newcommand{\ucf}{\textsc{UCF}}
\newcommand{\unbal}{\textsc{Unbal}}
\newcommand{\uniqpref}{\textsc{UniquePref}}
\newcommand{\uniqsuff}{\textsc{UniqueSuff}}
\DeclareMathOperator{\Th}{Th}
\def\Enn{\mathbb{N}}
\author{Luke Schaeffer \\
Massachusetts Institute of Technology\\
Cambridge, MA 02139\\
USA\\
\href{mailto:lrschaeffer@gmail.com}{\tt lrschaeffer@gmail.com}
\and Jeffrey Shallit\\
School of Computer Science\\
University of Waterloo\\
Waterloo, ON N2L 3G1 \\
Canada\\
\href{mailto:shallit@cs.uwaterloo.ca}{\tt shallit@cs.uwaterloo.ca}
}
\title{Closed, Palindromic, Rich, Privileged, Trapezoidal, and Balanced Words in Automatic Sequences}
\begin{document}

\maketitle

\theoremstyle{plain}
\newtheorem{theorem}{Theorem}
\newtheorem{corollary}[theorem]{Corollary}
\newtheorem{lemma}[theorem]{Lemma}
\newtheorem{proposition}[theorem]{Proposition}

\theoremstyle{definition}
\newtheorem{definition}[theorem]{Definition}
\newtheorem{example}[theorem]{Example}
\newtheorem{conjecture}[theorem]{Conjecture}

\theoremstyle{remark}
\newtheorem{remark}[theorem]{Remark}

\begin{abstract}
We prove that the property of being closed (resp., palindromic, rich, privileged
trapezoidal, balanced) is expressible in 
first-order logic for automatic (and some related) sequences.
It therefore follows that the characteristic function of those $n$ for
which an automatic sequence $\bf x$ has a closed (resp., palindromic,
privileged, rich,
trapezoidal, balanced) factor of length $n$ is automatic.  
For privileged words this requires a new characterization of the
privileged property.  We compute the corresponding characteristic functions for
various famous sequences, such as the Thue-Morse sequence, the
Rudin-Shapiro sequence, the ordinary paperfolding sequence, the
period-doubling sequence, and the Fibonacci sequence.  Finally, we also
show that the function counting the total number of palindromic factors
in a prefix of length $n$ of a $k$-automatic sequence is not $k$-synchronized.
\end{abstract}

\section{Introduction}

Recently a wide variety of different kinds of words have been
studied in the combinatorics on words literature,
including the six flavors of the title:  closed, palindromic, rich, privileged,
trapezoidal, and balanced words.   In this paper we show that, for
$k$-automatic sequences $\bf x$ (and some analogs, such as the
so-called ``Fibonacci-automatic'' sequences \cite{Du&Mousavi&Schaeffer&Shallit:2014}), the property of a
factor belonging to each class is expressible in first-order logic; more
precisely, in the theory $\Th(\Enn, +, n \rightarrow {\bf x}[n])$.
Previously we did this for unbordered factors \cite{Goc&Mousavi&Shallit:2013}.

As a consequence, we get that (for example) the characteristic sequence
of those lengths for which a factor of that length belongs to each class
is $k$-automatic, and the number of such factors of each length forms
a $k$-regular sequence.  (For definitions, see, for example,
\cite{Allouche&Shallit:2003}.)

Using an implementation of a decision procedure for first-order expressible
properties, we can give explicit expressions for the lengths of factors
in each class for some famous sequences, such as the Thue-Morse sequence,
the Rudin-Shapiro sequence, the period-doubling sequence, and the 
ordinary paperfolding sequence.  For some of the properties, these
expressions are surprisingly complicated.

\section{Notation and definitions}
\label{defs}

As usual, if $w = xyz$, we say that $x$ is a prefix of $w$,
that $z$ is a suffix of $w$, and $y$ is a factor of $w$.
By $|x|_w$ we mean the number of
(possibly overlapping) occurrences of $w$ as a factor of $x$.  For example,
$|{\tt confrontation}|_{\tt on} = 3$.
By $x^R$ we mean the reversal (sometimes called mirror image) of the 
word $x$.  Thus, for example, $({\tt drawer})^R = {\tt reward}$.
By $\Sigma_k$ we mean the alphabet $\{0,1,\ldots, k-1\}$ of
cardinality $k$.

A factor $w$ of $x$ is said to be {\it right-special} if both
$wa$ and $wb$ are factors of $x$, for two distinct letters $a$ and $b$.

A word $x$ is a {\it palindrome\/} if $x = x^R$.
Examples of palindromes in English include {\tt radar} and {\tt redivider}.
Droubay, Justin, and Pirillo 
\cite{Droubay&Justin&Pirillo:2001}
proved that every word of length $n$ contains at most $n+1$ distinct
palindromic factors (including the empty word).
A word is called {\it rich} if it contains exactly this many.  
For example, the  English words {\tt logology} and {\tt Mississippi} are
both rich.
For example, {\tt Mississippi} has the
following distinct nonempty palindromic factors:
$${\tt M,\ i,\ s,\ p,\ ss,\ pp,\ sis,\ issi,\ ippi,\ ssiss,\ ississi}. $$
For more about rich words, see 
\cite{Glen&Justin&Widmer&Zamboni:2009,deLuca&Glen&Zamboni:2008,Bucci&DeLuca&Glen&Zamboni:2009,Bucci&deLuca&DeLuca:2009}.

A nonempty word $w$ is a {\it border} of a word $x$ if 
$w$ is both a prefix and a suffix of $x$.
A word $x$ is called {\it closed} (aka ``complete
first return'') if it is of length $\leq 1$,
or if it has a border $w$ with $|x|_w = 2$.
For example, {\tt abracadabra} is closed because of the border {\tt abra},
while {\tt alfalfa} is closed because of the border {\tt alfa}. 
The latter example shows that, in the definition, the prefix and suffix
are allowed to overlap.  For more about closed words,
see \cite{Badkobeh&Fici&Liptak:2015}.

A word $x$ is called {\it privileged} if it is of length $\leq 1$,
or it has a border $w$ with $|x|_w = 2$ that is itself privileged.
Clearly every privileged word is closed, but {\tt mama} is an 
example of an English word that is
closed but not privileged.  For more about privileged words,
see \cite{Kellendonk&Lenz&Savinien:2013,Peltomaki:2013,Peltomaki:2015,Forsyth&Jayakumar&Peltomaki&Shallit:2015}.

A word $x$ is called {\it trapezoidal} if it has,
for each $n \geq 0$, at most $n+1$ distinct factors of length $n$.
Since for $n=1$ the definition requires at most $2$ distinct factors,
this means that every trapezoidal word can be defined over an alphabet
of at most $2$ letters.  An example in English is the word {\tt deeded}.  
See, for example, \cite{deLuca:1999,dAlessandro:2002,deLuca&Glen&Zamboni:2008,Bucci&DeLuca&Fici:2013}.

A word $x$ is called {\it balanced} if, for all factors $y,z$ of the
same length of $x$ and all letters $a$ of the alphabet,  the inequality
$\left| |y|_a - |z|_a \right| \leq 1$ holds.  Otherwise it is {\it unbalanced}.
An example of a balanced word in English is {\tt banana}.

We use the terms ``infinite sequence'' and ``infinite word'' as synonyms.
In this paper, names of infinite words are given in the {\bf bold\/} font.
All infinite words are
indexed starting at position $0$.  If ${\bf x} = x_0 x_1 x_2 \cdots$ is
an infinite word, with each $x_i$ a single letter, then
by ${\bf x}[i..j]$ for $j \geq i-1$ we mean the finite word
$x_i x_{i+1} \cdots x_j$.  By $[i..j]$ we mean the set $\{i, i+1, \ldots, j\}$.

\section{Sequences}

In this section we define the five sequences we will study.  For
more information about these sequences,
see, for example, \cite{Allouche&Shallit:2003}.

The {\it Thue-Morse sequence}
${\bf t} = t_0 t_1 t_2 \cdots = {\tt 01101001} \cdots$
is defined by
the relations $t_0 = 0$, $t_{2n} = t_n$, and $t_{2n+1} = 1-t_n$.  It
is also expressible as the fixed point, starting with $0$, of the
morphism $\mu:0 \rightarrow 01$, $1 \rightarrow 10$.

The {\it Rudin-Shapiro sequence}
${\bf r} = r_0 r_1 r_2 \cdots = {\tt 00010010} \cdots$ is
defined by the relations $r_0 = 0$, 
$r_{2n} = r_n$, $r_{4n+1} = r_n$, $r_{8n+7} = r_{2n+1}$,
$r_{16n+3} = r_{8n+3}$, $r_{16n+11} = r_{4n+3}$.  
It is also expressible as the image, under the coding 
$\tau:  n \rightarrow \lfloor n/2 \rfloor$, of the fixed point,
starting with $0$, of the morphism
$\rho:0 \rightarrow 01$, $1 \rightarrow 02$, $2 \rightarrow 31$, 
$3 \rightarrow 32$.

The {\it ordinary paperfolding sequence}
${\bf p} = p_0 p_1 p_2 \cdots = {\tt 00100110} \cdots$
is defined by the relations $p_0 = 0$, $p_{2n+1} = p_n$,
$p_{4n} = 0$, $p_{4n+2} = 1$.
It is also expressible as the image, under the coding
$\tau$ above, of the fixed point, starting with $0$, of the morphism
$\rho: 0 \rightarrow 01$, $1 \rightarrow 21$, $2 \rightarrow 03$, 
$3 \rightarrow 23$.

The {\it period-doubling sequence}
${\bf d} = d_0 d_1 d_2 \cdots  = {\tt 10111010} \cdots$
is defined by the relations $d_0 = 1$,
$d_{2n} = 1$, $d_{4n+1} = 0$, and $d_{4n+3} = d_n$.
It is also expressible as the fixed point, starting with $1$,
of the morphism $\delta:1 \rightarrow 10$, $0 \rightarrow 11$.

The {\it Fibonacci sequence}
${\bf f} = f_0 f_1 f_2 \cdots = {\tt 01001010} \cdots$
is the fixed point, starting with $0$, of the morphism
$\varphi:0 \rightarrow 01$, $1 \rightarrow 0$.

\section{Common predicates}

Before we see how rich words, privileged words, closed words, etc. can
be phrased as first-order predicates, let us define a few basic predicates.

First, we have the two basic predicates $\predin(i,r,s)$, which is true iff
$i \in [r..s]$:
$$\predin(i,r,s) := (i \geq r) \ \wedge\ (i\leq s),$$
and
$\subs(i,j,m,n)$, which is true iff $[i..i+m-1] \subseteq [j..j+n-1]$:
$$
\subs(i,j,m,n) := (j \leq i) \ \wedge\ (i+m \leq j+n).
$$

Next, we have the predicate
$$\factoreq(i,j,n) := \forall k \ (k<n) \implies ({\bf x}[i+k] = {\bf x}[j+k]),$$
which checks whether ${\bf x}[i..i+n-1]$ and ${\bf x}[j..j+n-1]$ are equal by comparing them at corresponding positions, ${\bf x}[i+k]$ and ${\bf x}[j+k]$, for $k = 0, \ldots, n-1$. By a similar principle, we can compare ${\bf x}[i..i+n-1]$ with ${\bf x}[j..j+n-1]^{R}$, but in this paper we only need the special case $i = j$, i.e., palindromes:
$$
\pal(i,n) := \forall k \ (k<n) \implies ({\bf x}[i+k] = {\bf x}[i+n-1-k]).
$$

From $\factoreq$, we derive other useful predicates. For instance, the predicate
$$
\occurs(i,j,m,n) := (m \leq n) \ \wedge \ 
(\exists k \ (k+m \leq n) \ \wedge \ \factoreq(i,j+k,m))
$$
tests whether ${\bf x}[i..i+m-1]$ is a factor of ${\bf x}[j..j+n-1]$. We also define
\begin{align*}
\border(i,m,n) &:= \predin(m,1,n) \ \wedge \ \factoreq(i,i+n-m,m),
\end{align*}
which is true iff ${\bf x}[i..i+m-1]$ is a border of ${\bf x}[i..i+n-1]$.

In the next five sections, we obtain our results using the
implementation of a decision procedure for the corresponding properties,
written by Hamoon Mousavi, and called {\tt Walnut}, to prove theorems
by machine computation.  The software is available for download at

\centerline{\url{https://cs.uwaterloo.ca/~shallit/papers.html} \ .}

All of the predicates in this paper can easily be translated into
Hamoon Mousavi's {\tt Walnut} program.  Files for the examples in
this paper are available at the same URL as above, so the reader can
easily run and verify the results.

\section{Closed words}

We can create a predicate $\closed(i,n)$ that asserts that
${\bf x}[i..i+n-1]$ is closed as follows:

$$
(n\leq 1) \ \vee \ (\exists j \ (j<n) \ \wedge\ \border(i,j,n) \ \wedge \neg \occurs(i,i+1,j,n-2)) \\
$$

\begin{theorem}
\begin{enumerate}[(a)]
\item There is a closed factor of Thue-Morse of every length.

\item There is a 15-state automaton accepting the base-$2$ representation
of those $n$ for which there is a closed factor of Rudin-Shapiro of length
$n$.

\item There is an 11-state automaton accepting the base-$2$ representation
of those $n$ for which there is a closed factor of the paperfolding sequence
of length $n$.  It is depicted below in Figure~\ref{Pclosed}.

\item There is a closed factor of the period-doubling sequence of 
every length.

\item There is a closed factor of the Fibonacci sequence of
every length.
\end{enumerate}
\end{theorem}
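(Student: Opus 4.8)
The plan is to convert each part of the theorem into a decidable first-order statement over the appropriate numeration system and then invoke the decision procedure (via {\tt Walnut}) to settle it. The key building block is already in hand: the predicate $\closed(i,n)$ asserts that the factor ${\bf x}[i..i+n-1]$ is closed. From this, the natural predicate to form is
$$
L(n) := (\exists i \ \closed(i,n)),
$$
which holds precisely when the sequence $\bf x$ has some closed factor of length $n$. Each part of the theorem is then a claim about the set of $n$ satisfying $L(n)$: parts (a), (d), and (e) assert that $L(n)$ holds for all $n$, while parts (b) and (c) assert that the minimal automaton recognizing $\{n : L(n)\}$ has a specific number of states.

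First I would verify that $\closed(i,n)$, as written, is a legitimate first-order formula in $\Th(\Enn, +, n \rightarrow {\bf x}[n])$: it is built from $\border$, $\occurs$, and $\factoreq$, all of which were shown earlier in the excerpt to be expressible, combined using only $\exists$, $\vee$, $\wedge$, $\neg$, and the decidable base predicates $\predin$ and $\subs$. Since existential quantification over the starting position $i$ preserves first-order expressibility, $L(n)$ is itself a first-order predicate with one free variable $n$, and the general theory (for $k$-automatic and Fibonacci-automatic sequences) guarantees that $\{n : L(n)\}$ is a recognizable set with a computable minimal automaton. For parts (a), (d), and (e) I would then assert $\forall n \ L(n)$ and have the decision procedure confirm it returns \emph{true}; for the Fibonacci sequence the same reasoning applies verbatim, replacing base-$k$ representations by Zeckendorf representations and appealing to the Fibonacci-automatic framework cited in the introduction.

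For parts (b) and (c), I would instead compute the minimal DFA accepting the base-$2$ representations of $\{n : L(n)\}$ directly and read off its state count, obtaining the claimed $15$ and $11$ states respectively; the paperfolding automaton can be displayed as Figure~\ref{Pclosed}. A subtle point worth checking is the boundary behavior of the quantifier $\exists i$: because $\bf x$ is infinite, there is no upper constraint on $i$, but the decision procedure handles unbounded existentials over recognizable sets automatically, so no special finiteness argument is needed.

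The main obstacle I anticipate is not conceptual but computational. The predicate $\closed(i,n)$ nests $\occurs$ (which itself contains an existential quantifier and a $\factoreq$) inside a further existential over $i$, and $\factoreq$ carries a universal quantifier over $k$; each quantifier alternation can cause the intermediate automata built during the decision procedure to blow up in size before minimization collapses them. The binary representations of Rudin-Shapiro and paperfolding are encoded by larger underlying automata than Thue-Morse, so the intermediate products are correspondingly larger, and managing this state explosion — ensuring the computation terminates within available memory — is where the real engineering effort lies. That the final minimized automata are as small as $15$ and $11$ states is precisely the kind of nonobvious collapse that justifies reporting these specific counts as the content of the theorem.
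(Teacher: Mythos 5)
Your proposal matches the paper's approach exactly: the paper defines $\closed(i,n)$ as a first-order predicate, implicitly forms $\exists i\ \closed(i,n)$, and settles all five parts by running Mousavi's {\tt Walnut} decision procedure (using the Fibonacci-automatic framework with Zeckendorf representations for part (e)), which is precisely what you describe. Your added remarks on quantifier-alternation blow-up and the engineering burden are accurate but incidental; the substance of your argument and the paper's are the same.
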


\begin{figure}[H]
\leavevmode
\def\epsfsize#1#2{0.6#1}
\centerline{\epsfbox{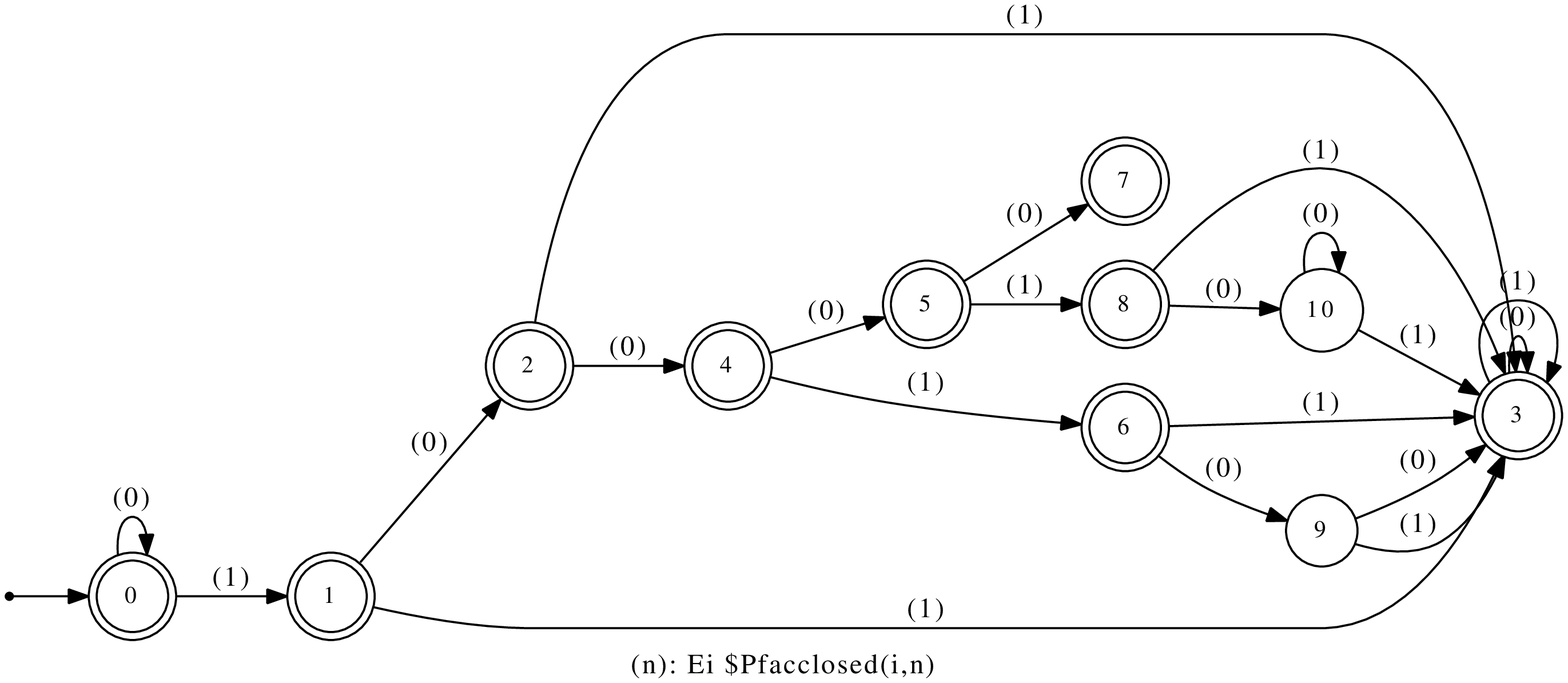}}
\caption{Automaton for lengths of closed factors of the paperfolding sequence}
\protect\label{Pclosed}
\end{figure}

As we have seen above, the Thue-Morse sequence contains a closed
factor of every length.  We now turn to enumerating $f(n)$, the number of
such factors of length $n$.  Here are the first few values of $f(n)$:

\begin{center}
\begin{tabular}{c|cccccccccccccccc}
$n$ & 0 & 1 & 2 & 3 & 4 & 5 & 6 & 7 & 8 & 9 & 10 & 11 & 12 & 13 & 14 & 15 \\
\hline
$f(n)$ & 1 & 2 & 2 & 2 & 4 & 4 & 6 & 4 & 8 & 8 & 10 & 8 & 12 & 8 & 8 &  8 
\end{tabular}
\end{center}

The first step is to create a predicate $\ucf(i,n)$ which is true
if ${\bf t}[i..i+n-1]$ is a closed factor of $\bf t$ of length $n$,
and is also the first occurrence of that factor:
$$ \ucf(i,n) := \closed(i,n) \ \wedge \  \neg \occurs(i,0,n,i+n-1).$$

The associated DFA then gives us (as in \cite{Goc&Mousavi&Shallit:2013})
a linear
representation for $f(n)$:  vectors $v, w$ and a matrix-valued
homomorphism $\mu:\lbrace 0, 1 \rbrace \rightarrow \Enn^{k \times k}$
such that $f(n) = v \mu(x) w^T$ for all $x$ that are valid base-$2$
representations of $n$.

They are as follows (with $\mu(i) = M_i$):

\begin{displaymath}
M_0 = \left[
{\scriptsize
\begin{array}{ccccccccccccccccccccccccccccccc}
1&1&0&0&0&0&0&0&0&0&0&0&0&0&0&0&0&0&0&0&0&0&0&0&0&0&0&0&0&0&0\\
0&0&0&0&1&0&0&0&0&0&0&0&0&0&0&0&0&0&0&0&0&0&0&0&0&0&0&0&0&0&0\\
0&0&0&0&0&0&0&1&1&0&0&0&0&0&0&0&0&0&0&0&0&0&0&0&0&0&0&0&0&0&0\\
0&0&0&0&0&0&0&0&0&0&1&1&0&0&0&0&0&0&0&0&0&0&0&0&0&0&0&0&0&0&0\\
0&0&0&0&0&0&0&0&0&0&0&0&0&0&0&0&0&0&0&0&0&0&0&0&0&0&0&0&0&0&0\\
0&0&0&0&0&0&0&0&0&0&0&0&0&0&0&0&1&1&0&0&0&0&0&0&0&0&0&0&0&0&0\\
0&0&0&0&0&0&0&0&0&0&0&0&0&0&0&0&0&0&0&0&1&0&0&0&0&0&0&0&0&0&0\\
0&0&0&0&0&0&0&0&0&0&0&0&0&0&0&0&0&0&0&0&1&1&0&0&0&0&0&0&0&0&0\\
0&0&0&0&0&0&0&0&0&0&0&0&0&0&0&0&0&0&0&0&0&0&1&1&0&0&0&0&0&0&0\\
0&0&0&0&0&0&0&0&0&0&0&0&0&0&0&0&0&0&0&0&0&0&0&0&1&0&0&0&0&0&0\\
0&0&0&0&0&0&0&0&0&0&0&0&0&0&0&0&0&0&0&0&0&0&0&0&1&0&0&1&0&0&0\\
0&0&0&0&0&0&0&0&0&0&0&0&0&0&0&0&0&0&0&0&0&0&0&0&1&0&0&0&0&1&0\\
0&0&0&0&0&0&0&0&0&0&0&0&1&0&0&0&0&0&0&0&0&0&0&0&0&0&0&0&0&0&1\\
0&0&0&0&0&0&0&0&0&0&0&0&0&0&0&0&0&0&0&0&0&0&1&0&0&0&0&0&0&0&0\\
0&0&0&0&0&0&0&0&0&0&0&0&0&0&0&0&0&0&0&0&0&0&0&0&0&0&0&0&0&0&0\\
0&0&0&0&0&0&0&0&0&0&0&0&0&0&0&0&0&0&0&0&0&0&0&0&0&0&0&0&0&1&0\\
0&0&0&0&0&0&0&0&0&0&0&0&0&0&0&0&0&0&0&0&1&0&0&1&0&0&0&0&0&0&0\\
0&0&0&0&0&0&0&0&0&0&0&0&0&0&0&0&0&0&0&0&0&0&1&0&0&0&0&0&1&0&0\\
0&0&0&0&0&0&0&0&0&0&0&0&0&0&0&0&0&0&0&0&0&0&0&0&1&0&0&0&0&0&0\\
0&0&0&0&0&0&0&0&0&0&0&0&0&0&0&0&0&0&0&0&0&0&1&0&0&1&0&0&0&0&0\\
0&0&0&0&0&0&0&0&0&0&0&0&0&0&0&0&0&0&0&0&0&0&0&0&0&1&0&0&0&0&0\\
0&0&0&0&0&0&0&0&0&0&0&0&1&0&0&0&0&0&0&0&0&0&0&0&0&0&0&0&1&0&0\\
0&0&0&0&0&0&0&0&0&0&0&0&0&0&0&0&0&0&0&0&0&0&1&0&0&0&0&0&0&0&0\\
0&0&0&0&0&0&0&0&0&0&0&0&1&0&0&0&0&0&0&0&0&0&0&0&0&0&0&0&1&0&0\\
0&0&0&0&0&0&0&0&0&0&0&0&0&0&0&0&0&0&0&0&0&0&1&0&0&0&0&0&0&0&0\\
0&0&0&0&0&0&0&0&0&0&0&0&0&0&0&0&0&0&0&0&0&0&0&0&0&1&0&0&0&0&0\\
0&0&0&0&0&0&0&0&0&0&0&0&0&0&0&0&0&0&0&0&0&0&0&0&0&2&0&0&0&0&0\\
0&0&0&0&0&0&0&0&0&0&0&0&0&0&0&0&0&0&0&0&0&0&0&0&0&0&0&0&0&0&0\\
0&0&0&0&0&0&0&0&0&0&0&0&0&0&0&0&0&0&0&0&0&0&0&0&0&0&0&0&1&0&1\\
0&0&0&0&0&0&0&0&0&0&0&0&0&0&0&0&0&0&0&0&0&0&0&0&0&0&0&0&0&0&0\\
0&0&0&0&0&0&0&0&0&0&0&0&0&0&0&0&0&0&0&0&0&0&0&0&0&0&0&0&0&0&2\\
\end{array}
}
\right]
\end{displaymath}

\begin{displaymath}
M_1 = \left[
{\scriptsize
\begin{array}{ccccccccccccccccccccccccccccccc}
0&0&1&1&0&0&0&0&0&0&0&0&0&0&0&0&0&0&0&0&0&0&0&0&0&0&0&0&0&0&0\\
0&0&0&0&0&1&1&0&0&0&0&0&0&0&0&0&0&0&0&0&0&0&0&0&0&0&0&0&0&0&0\\
0&0&0&0&0&0&0&0&0&1&0&0&0&0&0&0&0&0&0&0&0&0&0&0&0&0&0&0&0&0&0\\
0&0&0&0&0&0&0&0&0&0&0&0&1&1&0&0&0&0&0&0&0&0&0&0&0&0&0&0&0&0&0\\
0&0&0&0&0&0&0&0&0&0&0&0&0&0&1&1&0&0&0&0&0&0&0&0&0&0&0&0&0&0&0\\
0&0&0&0&0&0&0&0&0&0&0&0&0&0&0&0&0&0&1&1&0&0&0&0&0&0&0&0&0&0&0\\
0&0&0&0&0&0&0&0&0&0&0&0&0&0&0&0&0&0&0&1&0&0&0&0&0&0&0&0&0&0&0\\
0&0&0&0&0&0&0&0&0&0&0&0&0&0&0&0&0&0&0&1&0&0&1&0&0&0&0&0&0&0&0\\
0&0&0&0&0&0&0&0&0&0&0&0&0&0&0&0&0&0&0&0&0&0&1&0&0&0&0&0&0&0&0\\
0&0&0&0&0&0&0&0&0&0&0&0&0&0&0&0&0&0&0&0&0&0&0&0&0&1&1&0&0&0&0\\
0&0&0&0&0&0&0&0&0&0&0&0&0&0&0&0&0&0&0&0&0&0&0&0&0&0&1&0&1&0&0\\
0&0&0&0&0&0&0&0&0&0&0&0&0&0&0&0&0&0&0&0&0&0&0&1&0&0&0&0&0&0&0\\
0&0&0&0&0&0&0&0&0&0&0&0&1&0&0&0&0&0&0&0&0&0&1&0&0&0&0&0&0&0&0\\
0&0&0&0&0&0&0&0&0&0&0&0&0&0&0&0&0&0&0&0&0&0&0&0&0&1&1&0&0&0&0\\
0&0&0&0&0&0&0&0&0&0&0&0&0&0&0&0&0&0&0&0&0&0&0&0&0&0&0&0&0&1&0\\
0&0&0&0&0&0&0&0&0&0&0&0&0&0&0&0&0&0&0&0&0&0&0&1&0&0&0&0&0&1&0\\
0&0&0&0&0&0&0&0&0&0&0&0&0&0&0&0&0&0&0&1&0&0&1&0&0&0&0&0&0&0&0\\
0&0&0&0&0&0&0&0&0&0&0&0&1&0&0&0&0&0&0&0&0&0&0&0&0&0&0&0&0&0&0\\
0&0&0&0&0&0&0&0&0&0&0&0&0&0&0&0&0&0&0&0&0&0&0&0&0&0&0&0&0&0&0\\
0&0&0&0&0&0&0&0&0&0&0&0&0&0&0&0&0&0&0&0&0&0&0&1&0&0&0&0&0&0&0\\
0&0&0&0&0&0&0&0&0&0&0&0&0&0&0&0&0&0&0&0&0&0&0&1&0&0&0&0&0&0&0\\
0&0&0&0&0&0&0&0&0&0&0&0&1&0&0&0&0&0&0&0&0&0&1&0&0&0&0&0&0&0&0\\
0&0&0&0&0&0&0&0&0&0&0&0&0&0&0&0&0&0&0&0&0&0&0&0&0&0&0&0&0&0&0\\
0&0&0&0&0&0&0&0&0&0&0&0&1&0&0&0&0&0&0&0&0&0&1&0&0&0&0&0&0&0&0\\
0&0&0&0&0&0&0&0&0&0&0&0&0&0&0&0&0&0&0&0&0&0&0&0&0&0&0&0&0&0&0\\
0&0&0&0&0&0&0&0&0&0&0&0&0&0&0&0&0&0&0&0&0&0&0&0&0&1&0&0&1&0&0\\
0&0&0&0&0&0&0&0&0&0&0&0&0&0&0&0&0&0&0&0&0&0&0&0&0&0&0&0&2&0&0\\
0&0&0&0&0&0&0&0&0&0&0&0&0&0&0&0&0&0&0&0&0&0&0&0&0&1&1&0&0&0&0\\
0&0&0&0&0&0&0&0&0&0&0&0&0&0&0&0&0&0&0&0&0&0&0&0&0&0&0&0&0&0&2\\
0&0&0&0&0&0&0&0&0&0&0&0&0&0&0&0&0&0&0&0&0&0&0&0&0&1&1&0&0&0&0\\
0&0&0&0&0&0&0&0&0&0&0&0&0&0&0&0&0&0&0&0&0&0&0&0&0&0&0&0&0&0&2\\
\end{array}
}
\right]
\end{displaymath}

\begin{displaymath}
v = \left[
{\scriptsize
\begin{array}{ccccccccccccccccccccccccccccccc}
1&1&0&0&1&0&0&0&0&0&0&0&0&0&0&0&0&0&0&0&0&0&0&0&0&0&0&0&0&0&0
\end{array}
}
\right]
\end{displaymath}

\begin{displaymath}
w = \left[
{\scriptsize
\begin{array}{ccccccccccccccccccccccccccccccc}
1&0&1&1&0&0&0&0&1&0&0&0&1&1&0&0&0&1&0&0&0&1&1&0&0&0&0&1&0&0&1
\end{array}
}
\right]
\end{displaymath}

This linear representation can be minimized, using the algorithm in
\cite{Berstel&Reutenauer:2010}, obtaining

\begin{displaymath}
M'_0 = \left[
{\scriptsize
\begin{array}{ccccccccccccccccccccccccccccccc}
1&0&0&0&0&0&0&0&0&0 \\
0&0&1&0&0&0&0&0&0&0 \\
0&0&0&0&1&0&0&0&0&0 \\
0&0&0&0&0&0&1&0&0&0 \\
0&0&0&0&0&0&0&0&1&0 \\
0&0&0&0&0&-1&1&1&1/2&0 \\
0&0&0&0&0&-2&2&0&-3&4 \\
0&0&0&0&0&0&0&2&4&-4 \\
0&0&0&0&0&0&0&0&2&0 \\
0&0&0&0&0&0&0&1/2&11/4&-1
\end{array}
}
\right]
\end{displaymath}

\begin{displaymath}
M'_1 = \left[
{\scriptsize
\begin{array}{ccccccccccccccccccccccccccccccc}
0&1&0&0&0&0&0&0&0&0 \\
0&0&0&1&0&0&0&0&0&0 \\
0&0&0&0&0&1&0&0&0&0 \\
0&0&0&0&0&0&0&1&0&0 \\
0&0&0&0&0&0&0&0&0&1 \\
0&0&0&0&0&2&-2&-1&4&-2 \\
0&0&0&0&0&0&0&0&1&0 \\
0&0&0&0&0&4&-4&0&10&-8 \\
0&0&0&0&0&0&0&0&2&0 \\
0&0&0&0&0&1&-1&-1/2&7/2&-1
\end{array}
}
\right]
\end{displaymath}

\begin{displaymath}
v' = \left[
{\scriptsize
\begin{array}{ccccccccccccccccccccccccccccccc}
1&0&0&0&0&0&0&0&0&0
\end{array}
}
\right]
\end{displaymath}

\begin{displaymath}
w' = \left[
{\scriptsize
\begin{array}{ccccccccccccccccccccccccccccccc}
1&2&2&2&4&4&6&4&8&8
\end{array}
}
\right]
\end{displaymath}

From this, using technique in  \cite{Goc&Mousavi&Shallit:2013}, we can obtain the following relations 
\begin{eqnarray*}
f(8n) &=& -2f(2n+1) + f(4n) + 2f(4n+1) \\
f(8n+1) &=& -2f(2n+1) + 3 f(4n+1) \\
f(8n+3) &=& -2f(2n+1) +2f(4n+1) + f(4n+3) \\
f(8n+4) &=& 2f(2n+1) - \frac{5}{2}f(4n+1) + f(4n+2) + \frac{1}{2} f(4n+3) + f(8n+2) \\
f(8n+5) &=& 2f(4n+3) \\
f(8n+7) &=& -4f(2n+1) +2f(4n+1) - 2f(4n+3) +2f(8n+6) \\
f(16n+2) &=& -6f(2n+1) + \frac{13}{2} f(4n+1) + \frac{1}{2} f(4n+3) \\
f(16n+6) &=& -\frac{1}{2} f(4n+1) + f(4n+2) + \frac{3}{2} f(4n+3) + f(8n+2) \\
f(16n+10) &=& 2f(4n+3) + f(8n+6) \\
f(32n+14) &=& -2f(2n+1) -\frac{7}{2} f(4n+1) + 3 f(4n+2) +\frac{7}{2} f(4n+3)
	+ 3 f(8n+2) \\
f(32n+30) &=& 24f(2n+1) - 6f(4n+1) +14f(4n+3) -4f(8n+2) -12f(8n+6)+5f(16n+14) . 
\end{eqnarray*}

From these we can verify the following theorem by a
tedious induction on $n$:

\begin{theorem}
Let $n \geq 8$ and let $k \geq -1$ be an integer. Then
$$ f(n) = \begin{cases}
	2^{k+4}, & \text{if $15\cdot 2^k < n \leq 18 \cdot 2^k$}; \\
	2n - 20 \cdot 2^{k} - 2, &\text{if $18\cdot 2^k <
		n \leq 19 \cdot 2^k$}; \\
	56 \cdot 2^k - 2n + 2, & \text{if $19 \cdot 2^k <
		n \leq 20 \cdot 2^k$}; \\
	4n - 64\cdot 2^k - 4, & \text{if $20 \cdot 2^k < 
		n \leq 22 \cdot 2^k$}; \\
	112 \cdot 2^k - 4n + 4, & \text{if $22 \cdot 2^k < n \leq 24 \cdot 2^k$}; \\
	2^{k+4}, & \text{if $24\cdot 2^k < n \leq 28 \cdot 2^k$}; \\
	8n - 208 \cdot 2^k - 8, & \text{if $28 \cdot 2^k < n \leq 30 \cdot 2^k$}.
	\end{cases}
$$
\end{theorem}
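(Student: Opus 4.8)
The plan is to prove the closed form by strong induction on $n$, using the eleven recurrences displayed above as the engine of the induction. The first thing I would check is that the seven half-open intervals $(15\cdot 2^k, 18\cdot 2^k]$, $(18\cdot 2^k,19\cdot 2^k]$, $(19\cdot 2^k,20\cdot 2^k]$, $(20\cdot 2^k,22\cdot 2^k]$, $(22\cdot 2^k, 24\cdot 2^k]$, $(24\cdot 2^k,28\cdot 2^k]$, $(28\cdot 2^k,30\cdot 2^k]$, taken over all $k\ge -1$, partition the integers $n \ge 8$: within a fixed scale $k$ the breakpoints $15 < 18 < 19 < 20 < 22 < 24 < 28 < 30$ are strictly increasing, and $30\cdot 2^k = 15\cdot 2^{k+1}$ glues consecutive scales together, with $k=-1$ contributing exactly $\{8,\dots,15\}$. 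Thus the statement assigns a single value to each $n\ge 8$. I would also record that the recurrences are exhaustive: reducing mod $8$ leaves residues $0,1,3,4,5,7$ handled directly, residue $2$ refined mod $16$ into $16n+2,16n+10$, and residue $6$ refined into $16n+6$ and $16n+14$, the latter split mod $32$ into $32n+14,32n+30$. Hence every $n\ge 8$ matches exactly one recurrence, expressing $f(n)$ in terms of values $f(m)$ with $m<n$.

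For the inductive step I would, given $n$ in scale $k$ and in a particular subinterval, select the matching recurrence and locate each argument on its right-hand side. The odd-index and $\equiv 0 \pmod 8$ recurrences refer only to arguments near $n/2$ and $n/4$, which lie in scales $k-1$ and $k-2$; substituting the inductive closed form for those arguments and simplifying should reproduce the claimed expression. For instance, for an interior $n=8m$ in the first subinterval one finds $4m,4m+1$ in the first subinterval of scale $k-1$ (value $2^{k+3}$) and $2m+1$ in the first subinterval of scale $k-2$ (value $2^{k+2}$), so $f(8m) = -2\cdot 2^{k+2} + 2^{k+3} + 2\cdot 2^{k+3} = 2^{k+4}$, matching the claim. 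The four self-referential recurrences --- $f(8n+4)$ via $f(8n+2)$, $f(8n+7)$ via $f(8n+6)$, $f(16n+10)$ via $f(8n+6)$, and $f(32n+30)$ via $f(16n+14)$ --- each reference a strictly smaller argument in the same scale, so the strong induction on $n$ already supplies their values, which one substitutes alongside the lower-scale terms before simplifying.

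The main obstacle, and the reason the induction is ``tedious,'' is the bookkeeping of which subinterval each right-hand-side argument occupies. As $n$ sweeps a subinterval within a fixed residue class, the halved and quartered arguments sweep corresponding ranges of the earlier scales, and near the breakpoints $15,18,19,20,22,24,28,30$ such an argument can land in a different subinterval than the bulk of the range does; each of these crossings must be isolated and the correct lower-scale formula substituted, after which the affine and exponential pieces are checked to combine into the asserted value. Keeping track of exactly which subinterval every right-hand-side term falls into, across all seven subintervals and all eleven residue classes, is the error-prone heart of the argument. This uniform analysis is valid only once $2^k$ exceeds the largest modulus appearing (so that each residue class mod $32$ meets each subinterval in a clean arithmetic progression); the finitely many smaller scales, those $k$ with $2^k<32$, I would dispatch as base cases by computing $f(n)$ directly from the minimized linear representation $(v',w',M'_0,M'_1)$ and checking agreement with the formula, the displayed table of $f(0),\dots,f(15)$ serving as the anchor.

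Finally I would remark that, since every case reduces to a finite linear identity among the closed-form pieces, the whole verification is mechanical, and can be (and in practice was) cross-checked by evaluating $f(n)$ from the linear representation over a large initial range of $n$ and confirming it against the piecewise formula.
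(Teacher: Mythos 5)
Your proposal is correct and is essentially the paper's own argument: the paper proves this theorem precisely by ``a tedious induction on $n$'' from the eleven displayed recurrences, which is exactly the strong induction you lay out (partition of $n\geq 8$ into the scaled subintervals, exhaustiveness of the residue classes mod $8$, $16$, $32$, substitution of the closed form at scales $k-1$ and $k-2$ plus the same-scale self-referential terms, and direct verification at small scales). Your sketch in fact supplies more structural detail than the paper does.
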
	

\section{Palindromic words}

Palindromes in words have a long history of being studied; for
example, see \cite{Allouche&Baake&Cassaigne&Damanik:2003}.

It is already known that many aspects of palindromes in $k$-automatic
sequences are expressible in first-order logic; see, for
example, \cite{Charlier&Rampersad&Shallit:2012}.  

In this section, we turn to a variation on palindromic words, the
so-called ``maximal palindromes''.  For us, a factor $x$ of an infinite
word $\bf w$ is a {\it maximal palindrome} if $x$ is a palindrome, while
no factor of the form $axa$ for $a$ a single letter occurs in $\bf w$.
This differs slightly from the existing definitions, which deal with
the maximality of {\it occurrences} \cite{I&Inenaga&Bannai&Takeda:2010}.

The property of being a maximal palindrome is easily expressible in terms
of predicates defined above:
$$\maxpal(i,n) := \pal(i,n) \ \wedge \ 
	(\forall j \ ((j \geq 1) \wedge \factoreq(i,j,n)) \implies
		{\bf x}[j-1] \not= {\bf x}[j+n]  )$$
Using this, and our program, we can easily prove the following result:

\begin{theorem}
\begin{itemize}
\item[(a)]  The Thue-Morse sequence contains maximal palindromes of
length $3 \cdot 4^n$ for each $n \geq 0$, and no others.  These
palindromes are of the form $\mu^{2n}(010)$ and $\mu^{2n}(101)$ for $n \geq 0$.

\item[(b)]  The Rudin-Shapiro sequence contains exactly
8 maximal palindromes.  They are
$$ 0100010,0001000,1110111,1011101,0010000100,1101111011,1110110111,10000100100001 .$$

\item[(c)]  The ordinary paperfolding sequence contains exactly
6 maximal palindromes.  They are
$$ 001100,110011,011000110,100111001,1000110110001,0111001001110 .$$

\item[(d)]  The period-doubling sequence contains maximal palindromes
of lengths $3 \cdot 2^n - 1$ for all $n \geq 0$, and no others.

\item[(e)]  The Fibonacci sequence contains no maximal palindromes
at all.
\end{itemize}
\end{theorem}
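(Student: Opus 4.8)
The engine of the proof is the predicate $\maxpal(i,n)$ defined above together with the decision procedure, so the plan is entirely a matter of forming the right queries and then interpreting the automata they produce. First I would form the length predicate
$$ L(n) := \exists i \ \maxpal(i,n), $$
which holds exactly when the sequence in question has a maximal palindrome of length $n$, and run {\tt Walnut} to obtain a finite automaton $A$ reading the base-$2$ (respectively, Fibonacci) representation of $n$ and accepting precisely the set of such lengths. Everything in parts (a)--(e) is read off from the automaton $A$ attached to ${\bf t},{\bf r},{\bf p},{\bf d},{\bf f}$ respectively; the remaining work is to convert each $A$ into the stated human-readable description.

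For the two infinite families, parts (a) and (d), I would identify the language of $A$ with an explicit regular expression. For Thue--Morse I expect $A$ to accept exactly the base-$2$ language $11(00)^{*}$, whose members are the integers $3\cdot 4^{n}$; for the period-doubling sequence I expect the base-$2$ language $10\,1^{*}$, whose members are $3\cdot 2^{n}-1$. Since equality of the language of $A$ with a prescribed regular language is itself decidable, this step is a routine finite-state check, comparing $A$ against the obvious three- or four-state automaton for the target language.

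For the finite and empty cases, parts (b), (c), and (e), $A$ accepts a finite language (and for the Fibonacci word the empty language, which gives (e) at once). Here I would first read the finitely many accepted lengths off $A$: $\{7,10,14\}$ for ${\bf r}$ and $\{6,9,13\}$ for ${\bf p}$. To produce the explicit words rather than just their lengths, I would form the first-occurrence predicate
$$ U(i,n) := \maxpal(i,n) \ \wedge \ \neg\,\occurs(i,0,n,i+n-1), $$
in the style of $\ucf$, which selects exactly one starting position per distinct maximal palindrome. For each of the finitely many accepted lengths the set of $i$ satisfying $U(i,n)$ is automatic and finite, hence bounded, so a finite search locates every such $i$, and the corresponding factor ${\bf x}[i..i+n-1]$ is read directly from the sequence; collecting these yields the eight (respectively six) words listed, while the rejection of all other lengths by $A$ certifies that the lists are complete.

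The one point that goes beyond a single automaton computation is the second sentence of (a), the claim that the two maximal palindromes of length $3\cdot 4^{n}$ are $\mu^{2n}(010)$ and $\mu^{2n}(101)$. Three ingredients combine here. First, $\mu^{2}$ maps palindromes to palindromes: since $\mu^{2}(0)=0110$ and $\mu^{2}(1)=1001$ are palindromes and $\mu^{2}(w)^{R}=\mu^{2}(w^{R})$ for every $w$, a palindrome $w$ yields a palindrome $\mu^{2}(w)$; hence $\mu^{2n}(010)$ and $\mu^{2n}(101)$ are palindromic factors of $\bf t$ of length $3\cdot 4^{n}$, and they are distinct, having different first letters. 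Second, a count obtained from $U(i,n)$ by the linear-representation technique of the closed-word section shows that there are exactly two distinct maximal palindromes of each length $3\cdot 4^{n}$. Third, I must show these two particular words are themselves maximal; I would argue by induction on $n$, the base case $\{010,101\}$ being a direct finite check (for instance, neither $00100$ nor $10101$ occurs in $\bf t$, the latter because it is an overlap), and the inductive step lifting maximality of $p$ to maximality of $\mu^{2}(p)$ by desubstitution. This last desubstitution step, which relies on the recognizability of $\mu$, is the main obstacle: the automaton delivers lengths and counts effortlessly, but connecting the two machine-certified palindromes to the morphic expressions $\mu^{2n}(010)$ and $\mu^{2n}(101)$ requires the combinatorial fact that $\mu^{2}$ preserves the maximal-palindrome property, which is the only part not produced directly by the decision procedure.
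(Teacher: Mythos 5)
Your proposal is correct and takes essentially the same route as the paper: the paper's entire proof consists of defining $\maxpal(i,n)$ and running the {\tt Walnut} decision procedure on $\exists i \ \maxpal(i,n)$ for each of the five sequences, reading parts (a)--(e) off the resulting automata exactly as you describe (including extracting the finitely many explicit words in (b) and (c) and the regular-expression descriptions $11(00)^*$ and $101^*$ of the length sets in (a) and (d)). The only place you supply something the paper does not discuss is the identification of the two length-$3\cdot 4^n$ maximal palindromes with $\mu^{2n}(010)$ and $\mu^{2n}(101)$; your argument there (palindromicity of $\mu^{2n}(010)$ and $\mu^{2n}(101)$, a count of two per length, and an induction lifting maximality through $\mu^2$ via recognizability) is sound, but note that it can be kept almost entirely inside the decision procedure: one can machine-check the first-order statement that every $i$ with $\maxpal(i,3m)$, $m$ a power of $4$, satisfies $\factoreq(i,2m,3m)$ or $\factoreq(i,3m,3m)$, after which the only combinatorial input needed is the fixed-point identity ${\bf t}[4^n j \,..\, 4^n(j+3)-1] = \mu^{2n}({\bf t}[j..j+2])$, avoiding both the counting step and the appeal to recognizability of $\mu$.
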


We now turn to a result about counting palindromes in automatic
sequences.  To state it, we first need to describe representations of
integers in base $k$.  By $(n)_k$ we mean the string over the alphabet
$\Sigma_k := \{ 0,1, \ldots, k-1 \rbrace$
representing $n$ in base $k$, and having
no leading zeroes.  This is generalized to representing
$r$-tuples of integers
by changing the alphabet to $\Sigma_k^r$, and padding shorter representations
on the left, if necessary, with leading zeroes.  Thus, for
example, $(6,3)_2 = [1,0][1,1][0,1]$.  By $[w]_k$, for a word $w$,
we mean the value of $w$ when interpreted as an integer in base $k$.

Next, we need the concept of
$k$-synchronization 
\cite{Carpi&Maggi:2001,Carpi&DAlonzo:2009,Carpi&Dalonzo:2010,Goc&Schaeffer&Shallit:2013}.  We say a function $f(n)$ is 
{\it $k$-synchronized}
if there is a finite automaton accepting the language
$\{ (n, f(n))_k \ : \ n \geq 0 \}$.

The following is a useful lemma:

\begin{lemma}
If $(f(n))_{n \geq 0}$ is a $k$-synchronized sequence, and
$f \not= O(1)$, then there exists a constant $c > 0$ such that
$f(n) \geq cn$ infinitely often.
\label{unsynch}
\end{lemma}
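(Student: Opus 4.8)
The plan is to establish the following quantitative statement and then deduce the lemma from it: there is a constant $C \ge 1$, depending only on the synchronizing automaton, such that every value $v \ge 1$ in the image of $f$ is attained at some argument $n$ with $n \le C v$. Granting this, the lemma follows at once. Since $f \ne O(1)$, the image of $f$ contains arbitrarily large values $v_1 < v_2 < \cdots$; for each $v_j$ pick an argument $n_j$ with $f(n_j) = v_j$ and $n_j \le C v_j = C f(n_j)$, so that $f(n_j) \ge n_j / C$. The $n_j$ cannot remain bounded, for otherwise $f$ would take only finitely many values, contradicting $v_j \to \infty$. Hence infinitely many distinct $n_j$ occur, and on these $f(n) \ge c\,n$ with $c = 1/C > 0$.

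To prove the claim, let $A$ be a synchronized automaton with $m$ states recognizing $\{(n, f(n))_k : n \ge 0\}$, which I take to read its input most-significant-digit first (the least-significant-first case is symmetric). Fix $v \ge 1$ in the image and start from any $n_0$ with $f(n_0) = v$. If $n_0 > C v$ for the constant $C = k^{m+1}$ identified below, then $|(n_0)_k|$ strictly exceeds $|(v)_k|$, so in the padded pair representation the leading $d := |(n_0)_k| - |(v)_k| \ge 1$ positions all have the shape $[a_i, 0]$, the $0$ being a padding digit of the shorter component $(v)_k$. When $d \ge m+1$, the accepting run repeats a state among the $d$ positions following the initial one; deleting the corresponding loop is a standard pumping step that keeps the run accepting. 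Because every deleted position has second coordinate $0$, the second component of the new input is still $0\cdots 0\,(v)_k$, so its $f$-value is unchanged and equal to $v$; because the loop avoids the very first position, the first component still begins with a nonzero digit and hence is a legitimate representation of a strictly smaller integer $n_1$ with $f(n_1) = v$.

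Iterating this pump-down strictly decreases $d$ at each step, so it terminates at some $n'$ with $f(n') = v$ and $|(n')_k| \le |(v)_k| + m$. Then, using $k^{|(v)_k|} \le k\,v$ (which holds since $v \ge k^{|(v)_k|-1}$), we get $n' < k^{|(v)_k| + m} \le k^{m}\cdot k\,v = k^{m+1} v = C v$, establishing the claim. The steps I expect to require the most care are exactly the bookkeeping in this pumping argument: verifying that a repeated state can always be found strictly inside the padded prefix (so that the leading nonzero digit, and thus the validity of the representation, is preserved), and confirming that excising a loop lying entirely in the $[a_i,0]$-region leaves the $f$-coordinate untouched. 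The remaining points are routine, including the trivial cases ($v = 0$, or $n_0$ already at most $C v$) and the symmetric treatment of a least-significant-first automaton, where the padding zeros appear as a suffix rather than a prefix of the run.
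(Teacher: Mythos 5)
Your proof is correct, but it takes a genuinely different route from the paper's. The paper applies the pumping lemma once, in the pump-\emph{up} direction and at the opposite end of the representation: it chooses a single $n$ with $f(n) > k^N$ (where $N$ is the number of states of an automaton for the \emph{reversed} language $L^R$), writes $z = (n,f(n))_k^R = uvw$ with the loop sitting among the low-order digits, and pumps to obtain an infinite family of pairs $(n_i, f(n_i)) = [(uv^iw)^R]_k$ in which both components grow by a factor of essentially $k^{|v|}$ per pump (the hypothesis $f(n)>k^N$ guarantees the un-pumped high-order part of the second component is nonzero), so the ratio $f(n_i)/n_i$ stays bounded below by a positive constant. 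You instead pump \emph{down}: you excise loops lying in the high-order region where the second component is all padding zeros, which proves the quantitative claim that every value $v \ge 1$ in the image of $f$ has a preimage $n \le k^{m+1}v$; you then invoke unboundedness of the image to produce infinitely many witnesses, which are automatically distinct because their $f$-values are distinct. Your intermediate claim is a clean, reusable fact about synchronized relations (a linear bound on the least preimage of any attained value), it avoids reversing the automaton, and your bookkeeping --- keeping the leading nonzero digit so representations stay canonical, checking the deleted block has second coordinate zero so the $f$-value is preserved --- is spelled out more carefully than the paper's four-line sketch (which in fact contains apparent typos: ``$|uv|\le n$'' should read ``$|uv|\le N$,'' and it is the un-pumped tail's nonzero entries in both components that make the construction work). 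What the paper's route buys is brevity: one application of pumping to one well-chosen witness; what yours buys is a self-contained quantitative lemma and a cleaner logical structure.
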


\begin{proof}
Since $f \not= O(1)$, there exists $n>0$ such that $f(n) > k^N$, where $N$
is the number of states in the minimal automaton
 accepting $L^R$, where $L = \{ (n, f(n))_k : n \geq 0 \}$.
 Apply the pumping lemma to the string  $z = (n, f(n))_k^R$.  
 It says that we can
 write $z = uvw$, where $|uv| \leq n$ and  $w$ has nonzero elements in both components.
 Then, letting $(n_i, f(n_i)) = [(uv^i w)^R]_k$ we see that this subsequence
 has the desired property. 
\end{proof}

\begin{theorem}
The function counting the number of distinct palindromes in a prefix
of length $n$ is not $k$-synchronized.
\end{theorem}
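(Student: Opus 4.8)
The plan is to exhibit a single $k$-automatic sequence for which the counting function fails to be $k$-synchronized; the natural witness is the Thue-Morse sequence ${\bf t}$ with $k=2$. Write $P(n)$ for the number of distinct palindromic factors occurring in ${\bf t}[0..n-1]$. The whole argument rests on Lemma~\ref{unsynch}: I will not apply it to $P$ directly (which grows linearly) but to an auxiliary \emph{error term} measuring the deviation of $P$ from its linear trend.

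First I would record the combinatorial input about palindromes in ${\bf t}$. Appending a single letter to a word increases its number of distinct palindromic factors by at most one, so the indicator $g(m)$ of ``extending the prefix to length $m$ creates a new palindromic suffix'' is a first-order expressible (hence $2$-automatic) $\{0,1\}$-sequence, and $P(n)=\sum_{m\le n} g(m)$ is its summatory function; in particular $P$ is $2$-regular and its exact behavior can be read off a linear representation exactly as was done for $f(n)$ earlier in the paper. Using part~(a) of the maximal-palindrome theorem — the maximal palindromes of ${\bf t}$ have lengths $3\cdot 4^k$ — one sees that ${\bf t}$ has palindromic factors of only the two odd lengths $1$ and $3$, while for even lengths the palindrome complexity is bounded and takes a fixed value off a logarithmically-sparse exceptional set of lengths clustered around the scales $3\cdot 4^k$. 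From this I would derive the asymptotics $P(n)=\alpha n + E(n)$ for a suitable rational constant $\alpha$, where the error term $E$ is unbounded but satisfies $E(n)=o(n)$ (indeed $E(n)=O(\log n)$).

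With the asymptotics in hand the contradiction is short. Suppose $P$ were $2$-synchronized. Because $\lfloor \alpha n\rfloor$ is a synchronized function for rational $\alpha$ and the class of $k$-synchronized relations is closed under first-order definable operations, the function $h(n):=|P(n)-\lfloor \alpha n\rfloor|$ would also be $2$-synchronized. But $h$ is unbounded, so $h\ne O(1)$, and Lemma~\ref{unsynch} then forces $h(n)\ge cn$ infinitely often for some $c>0$; this contradicts $h(n)=o(n)$. Hence $P$ is not $k$-synchronized. I expect the genuine obstacle to be the asymptotic analysis of the previous paragraph: pinning down the exact linear rate $\alpha$ and, more delicately, proving that the deviation $E$ is simultaneously unbounded and sublinear. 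Everything else — the reduction to ${\bf t}$, the regularity of $P$, and the final closure-plus-Lemma~\ref{unsynch} step — is routine.
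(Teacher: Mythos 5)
Your proposal has a genuine gap, and it sits exactly where you suspect it does. The entire argument rests on the claim that the Thue--Morse counting function satisfies $P(n)=\alpha n+E(n)$ with $E$ \emph{simultaneously} unbounded and $o(n)$, and this is asserted rather than proved. Worse, the sketch you give for it does not lead there: bounded palindrome complexity (per length, off a sparse exceptional set) tells you how many palindromes of each length exist in the whole infinite word, but $E(n)$ is governed by \emph{where each palindrome first occurs} in the prefix, and nothing in your outline controls those first-occurrence positions beyond the linear-recurrence bound, which only gives $P(n)=\Theta(n)$ with linear-sized uncertainty. In particular, the unboundedness of $E$ is not justified at all; if $E$ turns out to be bounded (or otherwise tame), no contradiction with Lemma~\ref{unsynch} arises and the whole approach collapses. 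It is entirely plausible that the Thue--Morse prefix-palindrome count \emph{is} $2$-synchronized, in which case your chosen witness is simply wrong; your proof gives no way to rule this out.

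The paper's proof is engineered precisely to avoid this difficulty, and it is worth seeing how. Rather than analyzing the asymptotics of any single sequence's counting function (the hard, possibly intractable part), it constructs two $4$-automatic ``twin'' sequences $\bf a$ and $\bf b$ that differ only in the widths of certain blocks clustered at the scales $4^k$. By design, each scale contributes exactly two palindromes to $\bf b$'s prefix count that are absent from $\bf a$'s, so the \emph{difference} of the two counting functions is $2\log_4 n + O(1)$ --- unbounded and sublinear --- with no need to know either function individually. Then the argument you and the paper share kicks in: if the theorem failed, both counting functions would be synchronized, the Carpi--Maggi closure result \cite[Prop.~2.1]{Carpi&Maggi:2001} would make their difference synchronized, and Lemma~\ref{unsynch} forbids an unbounded synchronized function from being $O(\log n)$. (Note the conclusion is only that at least one of $\bf a$, $\bf b$ fails to be synchronized; the proof never identifies which.) Your final closure-plus-Lemma step is sound and matches the paper's, but the reduction to a single natural sequence replaces the paper's controlled comparison with an open asymptotic problem, so the proposal as it stands does not prove the theorem.
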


\begin{proof}
Our proof is based on two infinite words, ${\bf a} = (a_i)_{i \geq 0}$ and
${\bf b} = (b_i)_{i \geq 0}$.

The word ${\bf a}$ is defined as follows:
$$ a_i = \begin{cases}
	(k \bmod 2)+1, & \text{if there exists $k$ such that
		$4^{k+1} - 4^k \leq i \leq 4^{k+1} + 4^k$}; \\
		0, & \text{otherwise}.
	\end{cases}
$$

The word ${\bf b}$ is defined as follows:
$$ b_i = \begin{cases}
	(k \bmod 2)+1, & \text{if there exists $k$ such that
		$4^{k+1} - 4^k < i < 4^{k+1} + 4^k$}; \\
		0, & \text{otherwise}.
	\end{cases}
$$

We leave the easy proof that $\bf a$ and $\bf b$ are $4$-automatic
to the reader.

We now compare the palindromes in $\bf a$ to those in $\bf b$.
From the definition, every palindrome in either sequence is clearly in
$$ 0^* + 1^* + 2^* + 0^* 1^* 0^* + 0^* 2^* 0^* .$$

Since $\bf a$ has longer blocks of $1$'s and $2$s than $\bf b$ does,
there may be some palindromes of the form $1^i$ or $2^i$ that occur in
a prefix of $\bf a$,  but not the corresponding prefix of $\bf b$.
Conversely, $\bf b$ may contain palindromes of the form $0^i$ that do
not occur in the corresponding prefix of $\bf a$.

Call an occurrence of a factor in a word
{\it novel} if it is the first occurrence in the word.
The remaining palindromes (of the form $0^i 1^j 0^i$ or $0^i 2^j 0^i$)
must be centered at a position that is a
power of $4$.
It is not hard to see that if ${\bf a}[i..i+n-1]$ is a novel palindrome 
occurrence of this form in $\bf a$, then $\bf b[i..i+n-1]$ is also a novel 
palindrome occurrence of this form.

On the other hand, for each $k \geq 1$,
there are two palindromes that occur in $\bf b$ but not $\bf a$.
The first is of the form $0 1^j 0$ or $0 2^j 0$,
since the corresponding factor of $\bf a$ is either $1 \cdots 1$ or 
$2 \cdots 2$, and hence has been previously accounted for
Second, there is a factor of the form $0^*1^*0^*$ or $0^*2^*0^*$ which appears
as $20^*1^*0^*$ or $10^*2^*0^*$ in $\bf a$,
since the neighbouring block of $1$'s or $2$'s is slightly wider
and therefore slightly closer. We conclude that the length-$n$ prefix
of $\bf b$ has $2 \log_4 n + O(1)$
more palindromes than the length-$n$ prefix of $\bf a$.

Now suppose, contrary to what we want to prove, that
the number of palindromes in the prefix of length $n$ of a
$k$-automatic sequence is $k$-synchronized.
In particular, the sequence
$\bf a$ (resp., $\bf b$) is 4-automatic, so the number of palindromes in
${\bf a}[0..n-1]$ (resp., ${\bf b}[0..n-1]$ is
4-synchronized.
Now, using a result of
Carpi and Maggi \cite[Prop.~2.1]{Carpi&Maggi:2001},
the number of palindromes in
${\bf b}[1..n]$ minus the number of palindromes in ${\bf a}[1..n]$
is 4-synchronized.
But from above this difference is $2 \log_4 n + O(1)$, which by 
Lemma~\ref{unsynch} cannot be $4$-synchronized.
This is a contradiction.
\end{proof}

\section{Rich words}

As we have seen above, a word $x$ is rich iff it has $|x|+1$ distinct
palindromic subwords.  As stated, it does not seem easy to phrase
this in first-order logic.  Luckily, there is an alternative
characterization of rich words, which can be found in \cite[Prop.~3]{Droubay&Justin&Pirillo:2001}:
a word is rich if every prefix $p$ of $w$ has a palindromic suffix $s$
that occurs only once in $p$.  This property can be stated as follows:
\begin{align*}
\rich(i,n) &:= \forall m \ \predin(m,1,n) \implies \\
& \qquad (\exists j \ \subs(j,i,1,m) \ \wedge \pal(j,i+m-j) \ \wedge 
\ \neg \occurs(j,i,i+m-j,m-1)) .
\end{align*}

Finally, we can express the property that ${\bf x}$ has a rich factor
of length $n$ as follows:
$$ \exists i \ \rich(i,n) .$$

\begin{theorem}
\begin{enumerate}[(a)]
\item The Thue-Morse sequence contains exactly
161 distinct rich factors, the longest being of length $16$.

\item The Rudin-Shapiro sequence contains exactly
975 distinct rich factors, the longest being of length $30$.

\item The ordinary paperfolding sequence contains 
exactly 494 distinct rich factors, the longest being of length $23$.

\item The period-doubling sequence has a rich factor of every
length.  In fact, every factor of the period-doubling sequence is rich.

\item Every factor of the Fibonacci sequence is rich.
\end{enumerate}
\end{theorem}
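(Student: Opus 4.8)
The plan is to reduce every assertion to the predicate $\rich(i,n)$ defined above and then invoke the decision procedure, exactly as was done for closed words. Since $\rich(i,n)$ is a first-order formula in $\Th(\Enn,+,n\to{\bf x}[n])$ — built from $\pal$, $\subs$, and $\occurs$, each of which is ultimately a Boolean combination of instances of $\factoreq$ together with the automaton computing the sequence — the property ``${\bf x}[i..i+n-1]$ is rich'' is decidable for each of our five sequences. The two kinds of claims call for slightly different treatments: parts (d) and (e) assert that \emph{every} factor is rich, which is a single universally quantified sentence, whereas parts (a)--(c) assert a finite count together with a maximal length, which requires a counting argument.

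For parts (d) and (e), I would translate the statement ``every factor is rich'' directly into the sentence
$$\forall i \ \forall n \ \rich(i,n),$$
feed it to {\tt Walnut} for the period-doubling sequence $\bf d$ and the Fibonacci sequence $\bf f$ respectively, and expect it to return {\tt TRUE} (note $\rich(i,0)$ holds vacuously, so no length guard is needed). The second claim of part (d), that every factor is rich, immediately implies its first claim, that there is a rich factor of every length, since the infinite word trivially contains a factor at every position and of every length $n$; thus a single run settles both assertions. For the Fibonacci sequence the decision procedure is run in the Fibonacci (Zeckendorf) numeration system rather than base $k$, but the predicate is identical.

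For parts (a)--(c), the strategy is to count \emph{novel} rich occurrences. Following the pattern of the $\ucf$ predicate from the section on closed words, I would define
$$ \mathit{uniquerich}(i,n) := \rich(i,n) \ \wedge \ \neg\occurs(i,0,n,i+n-1), $$
which holds iff ${\bf x}[i..i+n-1]$ is rich and $i$ is its first occurrence in $\bf x$. The number of distinct nonempty rich factors is then exactly the number of pairs $(i,n)$ with $n \geq 1$ satisfying this predicate. Since each of these three sequences contains only finitely many rich factors, {\tt Walnut} produces a DFA for $\mathit{uniquerich}$ accepting a finite language, and a direct enumeration (or the linear-representation machinery used for $f(n)$ in the closed-words section) yields the exact counts $161$, $975$, and $494$. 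The maximal length is read off from the automaton for the companion sentence $\exists i \ \rich(i,n)$: the longest base-$2$ string it accepts gives the lengths $16$, $30$, and $23$ respectively.

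The main obstacle I anticipate is purely computational rather than mathematical. The formula $\rich(i,n)$ nests a universal quantifier over $m$ around an existential quantifier over $j$, and $\occurs$ itself conceals a further existential quantifier; when this is composed with the negated occurrence predicate in $\mathit{uniquerich}$, both the quantifier-alternation depth and the sizes of the intermediate automata can grow rapidly. This is most acute for the Rudin-Shapiro sequence, whose base automaton is already comparatively large, so the genuine risk is that the intermediate automaton products exceed practical memory before determinization and minimization. The correctness of the reported counts therefore rests on the decision procedure terminating and on the standard minimization algorithm (as in \cite{Berstel&Reutenauer:2010}) being applied correctly; once the automata are built, no further mathematical difficulty remains.
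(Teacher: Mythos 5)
Your proposal matches the paper's approach: the paper proves this theorem by expressing richness via the Droubay--Justin--Pirillo characterization as the first-order predicate $\rich(i,n)$ and invoking the {\tt Walnut} decision procedure (with the Zeckendorf numeration system for the Fibonacci case), exactly as you describe. Your counting scheme for parts (a)--(c) via novel occurrences is the same technique the paper uses for closed factors (the $\ucf$ predicate), so the whole argument is essentially identical to the paper's.
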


Of course, (e) was already well known.

\section{Privileged words}

The recursive definition for privileged words given above in
Section~\ref{defs} is not obviously expressible in first-order logic.
However, we can prove a new, alternative characterization of these words,
as follows:

Let's say a word $w$ has property P if for all $n$,
$1 \leq n \leq |w|$, there exists a word $x$ such that 
$1 \leq |x| \leq n$, and $x$ occurs exactly once in the first $n$
symbols of $w$, as a prefix, and
$x$ also occurs exactly once in the last $n$ symbols of $w$, as a suffix.

\begin{lemma}
If $w$ is a bordered word with property $P$, then every
border also has property $P$. 
\end{lemma}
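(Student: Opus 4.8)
The plan is to show that any single witness supplied by property $P$ for the word $w$ can be recycled, unchanged, as a witness for the border $b$. Write $\ell = |b|$, and recall that $b$ is simultaneously a prefix and a suffix of $w$, with $\ell \le |w|$. The crucial preliminary observation is that, for every length $m$ with $1 \le m \le \ell$, the first $m$ symbols of $b$ coincide exactly with the first $m$ symbols of $w$ (because $b$ is a prefix of $w$ and $m \le \ell$), and likewise the last $m$ symbols of $b$ coincide exactly with the last $m$ symbols of $w$ (because $b$ is a suffix of $w$ and $m \le \ell$). Thus the two windows that property $P$ inspects at length $m$, namely the length-$m$ prefix window and the length-$m$ suffix window, are literally the same strings for $b$ as for $w$, whenever $m \le \ell$.

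First I would fix an arbitrary $m$ with $1 \le m \le \ell$ and, since $\ell \le |w|$, invoke property $P$ of $w$ at index $m$ to obtain a word $x$ with $1 \le |x| \le m$ that occurs exactly once as a prefix among the first $m$ symbols of $w$ and exactly once as a suffix among the last $m$ symbols of $w$. Because $|x| \le m \le \ell$, the string $x$ is short enough to be a prefix of $b$ and a suffix of $b$, not merely of $w$; this is where the length bound $|x| \le m$ does its work. I would then verify that $x$ serves as the required witness for $b$ at index $m$: by the window coincidence above, the number of occurrences of $x$ in the first $m$ symbols of $b$ equals the number in the first $m$ symbols of $w$, namely one, and similarly for the last $m$ symbols. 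Hence $x$ occurs exactly once as a prefix of the length-$m$ prefix of $b$ and exactly once as a suffix of the length-$m$ suffix of $b$, which is precisely what property $P$ for $b$ demands at length $m$.

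Since $m$ was arbitrary in $[1,\ell]$, this establishes property $P$ for $b$, completing the argument. I expect the only point requiring care, rather than a genuine obstacle, to be the bookkeeping that keeps the two windows and the bound $|x| \le m$ aligned, so that the witness $x$ is guaranteed to sit inside $b$ at both ends and its occurrence counts are inherited verbatim. Notably, no appeal to the recursive structure of privileged words is needed: the whole argument rests on the fact that property $P$ at a length $m$ examines only the first $m$ and the last $m$ symbols, and these are shared by $w$ and each of its borders for all $m$ up to $\ell$.
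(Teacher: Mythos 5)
Your proof is correct and follows essentially the same route as the paper's: for each length $m \le |b|$ you take the witness $x$ guaranteed by property $P$ for $w$, and observe that the length-$m$ prefix (resp., suffix) windows of $w$ and of its border $b$ are literally the same strings, so the occurrence counts — and hence the witness — transfer verbatim. The paper's proof is just a terser statement of this identical argument.
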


\begin{proof}
Let $z$ be a border of $w$. Given any $1 \leq n \leq |z|$, property $P$ for $w$ says that 
there exists a border $x$ of $w$ such that $1 \leq |x| \leq n$, and $x$ occurs exactly once 
in the first (resp., last) $n$ symbols in $w$. Then observe that the first 
(resp., last) $n$ 
symbols of $w$ are precisely the first (resp., last) $n$ symbols of $z$. Since $x$ is also 
a border of $z$, it follows that $z$ has property $P$.
\end{proof}

\begin{theorem}
A word $w$ is privileged if and only if it has property P.
\end{theorem}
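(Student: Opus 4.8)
The plan is to prove the equivalence by strong induction on $|w|$, establishing both implications at once so that the induction hypothesis is available in each direction, and invoking the preceding Lemma to handle privilegedness of a border cheaply. The base case $|w|\le 1$ is immediate: such words are privileged by definition, and they satisfy property P trivially (the requirement is vacuous when $|w|=0$, and when $|w|=1$ one takes $x=w$). So I fix $w$ with $|w|\ge 2$ and assume the theorem for all shorter words.

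For the forward direction, suppose $w$ is privileged, so it has a privileged border $u$ with $|w|_u=2$; since $|u|<|w|$, the induction hypothesis gives that $u$ has property P. To verify property P for $w$ I split on the value of $n$. When $|u|\le n<|w|$ I take $x=u$: because $u$ occurs in $w$ only at positions $0$ and $|w|-|u|$, it occurs exactly once in the first $n$ and exactly once in the last $n$ symbols of $w$, precisely because $n<|w|$ keeps the two occurrences from both landing in the truncation. When $n=|w|$ I take $x=w$ itself, which occurs once as prefix and once as suffix. Finally, when $1\le n<|u|$, the first (resp.\ last) $n$ symbols of $w$ coincide with the first (resp.\ last) $n$ symbols of $u$ (as $u$ is both a prefix and a suffix of $w$), so the witness supplied by property P of $u$ transfers verbatim to $w$.

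For the reverse direction, suppose $w$ has property P and apply it with $n=|w|-1$, which is legitimate since $|w|\ge 2$. This produces a word $x$ with $1\le |x|\le |w|-1$ that is a proper border of $w$ and occurs exactly once in $w[0..|w|-2]$ and exactly once in $w[1..|w|-1]$. The crux is to deduce $|w|_x=2$: the prefix and suffix occurrences sit at positions $0$ and $|w|-|x|$, and any further occurrence at a position $p$ with $0<p<|w|-|x|$ would lie inside $w[1..|w|-1]$ (indeed also inside $w[0..|w|-2]$) yet differ from the suffix occurrence, contradicting that $x$ occurs only once there. Hence $x$ is a border of $w$ with $|w|_x=2$. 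By the preceding Lemma $x$ inherits property P from $w$, and since $|x|<|w|$ the induction hypothesis makes $x$ privileged; therefore $w$ has a privileged border occurring exactly twice, so $w$ is privileged.

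The main obstacle is the case analysis in the forward direction—specifically, confirming that $u$, which occurs exactly twice in all of $w$, occurs exactly once in each length-$n$ truncation whenever $n<|w|$, and wiring together the three ranges of $n$ so they cover $1\le n\le|w|$—together with the ``exactly twice'' deduction in the reverse direction. The preceding Lemma is exactly what spares me from re-establishing privilegedness of the chosen border by hand, so isolating that lemma first is what makes the induction go through cleanly.
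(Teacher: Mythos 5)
Your proof is correct, and it shares the paper's key ingredients: the border Lemma plays the same role, and your reverse direction is essentially identical to the paper's (apply property P with $n=|w|-1$, check that the witness is a border occurring exactly twice, then invoke the Lemma and the induction hypothesis). The genuine difference is in the forward direction. The paper does not use induction there: it unrolls the recursive definition of privileged into a chain $w = w_0, w_1, \ldots, w_k$ of privileged words with $|w_k|=1$, each $w_{i+1}$ a border of $w_i$ occurring exactly twice in it, and for a given $n$ takes as witness the largest $w_i$ with $|w_i| \leq n$; the inequality $|w_{i-1}| > n$ (or $i=0$ with $n=|w|$) then forces uniqueness of the occurrence in the length-$n$ prefix and suffix. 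You instead fold both directions into a single strong induction and, in the forward direction, use only one step of the definition --- the privileged border $u$ with $|w|_u = 2$ --- together with the induction hypothesis that $u$ has property P, splitting into $1 \leq n < |u|$ (transfer $u$'s witness, since the first and last $n$ symbols of $w$ and of $u$ coincide), $|u| \leq n < |w|$ (take $x=u$; the two occurrences at positions $0$ and $|w|-|u|$ cannot both land in a length-$n$ window), and $n = |w|$ (take $x=w$). Unwinding your recursion reconstructs exactly the paper's chain and witnesses, so the mathematical content is the same; your organization buys a uniform inductive framework with no explicit chain to manage, while the paper's buys a forward direction that is non-inductive and exhibits the witness in closed form.
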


\begin{proof}
If $w$ is privileged, then, by definition, there is a sequence of
privileged words $w = w_0, w_1, ..., w_{k-1}, w_k$ such that $|w_k| = 1$
and for all $i$, $w_{i+1}$ is a prefix and suffix of $w_i$ and
occurs nowhere else in $w_i$. Given an integer $n$, let 
$x$ be the largest $w_i$ such that $|w_i| \leq n$. Either $i = 0$ because 
$n = |w|$ and everything works out, or $|w_{i-1}| > n$. Then $w_i$ is
a prefix of $w_{i-1}$ (and therefore a prefix of $w$),
and there is no other occurrence of $w_i$ in $w_{i-1}$
(which includes the first $n$ symbols of $w$). Similarly, $w_i$
is a suffix of $w$, but does not occur again in the last $n$ symbols of $w$. 

For the other direction,
we assume the word has property P and use
induction on the length of $w$. If $|w| = 1$
then the word is privileged immediately. Otherwise, take $n = |w| - 1$
and find the corresponding $x$ promised by property P. Then $x$
is both a prefix and a suffix of $w$, so it has property P.
It is also shorter than $w$, so by induction, $x$ is privileged.
Then $x$ is a privileged prefix and suffix of $w$
which does not occur anywhere else in $w$ (by property P), so $w$
is privileged. 
\end{proof}

This property can be represented as a predicate in two different ways.
First, 
let's write a predicate that is true iff the prefix
${\bf x}[i..i+m-1]$ occurs exactly once in 
${\bf x}[i..i+n-1]$:
$$\uniqpref(i,m,n) := \forall j \ \predin(j,1,n-m-1)
\implies \neg \factoreq(i,i+j,m) .$$

There is a similar expression for whether the suffix ${\bf x}[i+n-m..i+n-1]$
occurs exactly once in ${\bf x}[i..i+n-1]$:
$$\uniqsuff(i,m,n) := \forall j \ \predin(j,1,n-m-1) \implies \neg\factoreq(i+n-m,i+n-m-j,m) .$$

And finally, our first characterization of privileged words is
\begin{multline*}
\priv(i,n) := (n \leq 1) \ \vee \ (\forall m \ \predin(m,1,n) \implies
\\
(\exists p \ \predin(p,1,m) \ \wedge \ \border(i,p,n) \ \wedge\ \uniqpref(i,p,m)
\ \wedge\ \uniqsuff(i+n-m,p,m) )).
\end{multline*}

Alternatively, we can write
\begin{multline*}
\privtwo(i,n) := (n \leq 1) \ \vee \ (\forall m \ \predin(m,1,n) \implies \\
(\exists p\ \predin(p,1,m) \ \wedge\ \border(i,p,n) \ \wedge 
\ \neg\occurs(i,i+1,p,m-1) \ \wedge 
\  \neg\occurs(i,i+n-m,p,m-1))) .
\end{multline*}

\begin{theorem}
\begin{enumerate}[(a)]
\item There is a $46$-state automaton accepting the base-$2$ expansions
of those $n$ for which the Thue-Morse sequence has a privileged factor
of length $n$.

\item There is an $84$-state automaton accepting the base-$2$ expansions
of those $n$ for which the Rudin-Shapiro sequence has a privileged factor
of length $n$.

\item There is a $47$-state automaton accepting the base-$2$ expansions
of those $n$ for which the paperfolding sequence has a privileged factor
of length $n$.

\item The set of $n$ for which the period-doubling sequence has
a privileged factor of length $n$ is
$$\{ 0, 2\} \ \cup \ \{2n+1 \ : \ n \geq 0 \}.$$
There is a $4$-state automaton accepting the base-$2$ expansions
of those $n$ for which the period-doubling sequence has a privileged factor
of length $n$.  It is illustrated below in Figure~\ref{pdpriv}.  

\item There is a 20-state automaton accepting the Zeckendorf
representations of those pairs $(i,n)$ for which ${\bf f}[i..i+n-1]$ is
privileged.  It is illustrated below in Figure~\ref{fib2priv}.
The Fibonacci word has privileged factors of every length.
If $n$ is even there is exactly one privileged factor.  If $n$ is
odd there are exactly two privileged factors.

\end{enumerate}
\end{theorem}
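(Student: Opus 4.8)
The plan is to express each part as a single decidable sentence in $\Th(\Enn,+,n\rightarrow{\bf x}[n])$ and then invoke the decision procedure of \cite{Goc&Mousavi&Shallit:2013} (as implemented in {\tt Walnut}) to build and minimize the corresponding automaton. By the theorem just proved, a factor is privileged precisely when it satisfies $\priv(i,n)$ (equivalently $\privtwo(i,n)$), so the assertion ``${\bf x}$ has a privileged factor of length $n$'' is captured by the formula
$$ \exists i\ \priv(i,n) .$$
Since each of Thue--Morse, Rudin--Shapiro, and paperfolding is $2$-automatic, reading this formula over base-$2$ representations yields a regular language of lengths $n$; minimizing the resulting automaton produces the $46$-, $84$-, and $47$-state machines of parts (a)--(c). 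The same computation for the period-doubling sequence gives the $4$-state automaton of part (d), and one then reads off from it that the accepted set is exactly $\{0,2\}\cup\{2n+1 : n\ge 0\}$.

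For the Fibonacci sequence one works in the Fibonacci-automatic framework \cite{Du&Mousavi&Schaeffer&Shallit:2014}, over Zeckendorf representations. Rather than projecting away $i$, I would compute the automaton directly for the two-variable predicate $\priv(i,n)$; after minimization this is the $20$-state automaton of Figure~\ref{fib2priv}, which in particular accepts some pair $(i,n)$ for every $n$, giving privileged factors of all lengths. To count distinct privileged factors of each length, I would restrict to novel occurrences by conjoining the first-occurrence test, exactly as in the definition of $\ucf$, forming
$$ \priv(i,n)\ \wedge\ \neg\occurs(i,0,n,i+n-1), $$
and then verify, either from the resulting automaton or by evaluating the sentence asserting uniqueness of such $i$, that the number of such $i$ is $1$ when $n$ is even and $2$ when $n$ is odd.

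The main obstacle is feasibility rather than correctness. The quantifier prefix of $\priv$ is $\exists i\,\forall m\,\exists p$, with a further $\forall j$ hidden inside $\uniqpref$ and $\uniqsuff$, so the construction performs several quantifier alternations; each alternation requires a projection followed by complementation, and the projection step determinizes, potentially blowing the state count up exponentially before minimization collapses it to the small machines reported. The risk is that some intermediate automaton is too large to construct in practice. I would mitigate this by using the logically equivalent form $\privtwo$, which replaces the unique-prefix and unique-suffix tests by negated $\occurs$ predicates and tends to produce smaller intermediate automata, and by minimizing after each quantifier step. Once the intermediate blow-up is controlled, the remaining work---reading off the state counts and extracting the explicit length sets for parts (d) and (e)---is a routine verification on the minimized automata.
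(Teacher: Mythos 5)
Your proposal matches the paper's approach: the paper likewise establishes this theorem by translating the property-P characterization into the predicates $\priv$ and $\privtwo$, running the {\tt Walnut} decision procedure over base-$2$ (resp., Zeckendorf) representations, and reading the results off the minimized automata, including keeping $i$ as a free variable in part (e) and using first-occurrence (novel) tests for the counting claims. The only difference is an implementation detail: the paper reports using $\priv$ for parts (a)--(d) and switching to $\privtwo$ only for part (e), whereas you propose $\privtwo$ as a general fallback against intermediate blow-up, which is consistent with the paper's remark.
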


\begin{remark}  For (a)--(d) we used $\priv$ and for (e) we used
$\privtwo$.
\end{remark}

\begin{figure}[H]
\leavevmode
\def\epsfsize#1#2{1.0#1}
\centerline{\epsfbox{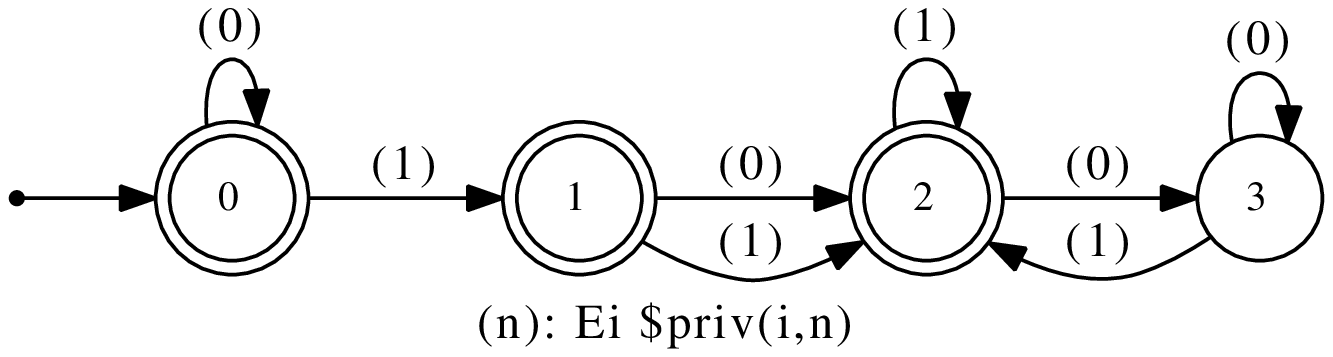}}
\caption{Automaton for lengths of privileged factors of the period-doubling
word}
\protect\label{pdpriv}
\end{figure}

\begin{figure}[H]
\leavevmode
\def\epsfsize#1#2{0.3#1}
\centerline{\epsfbox{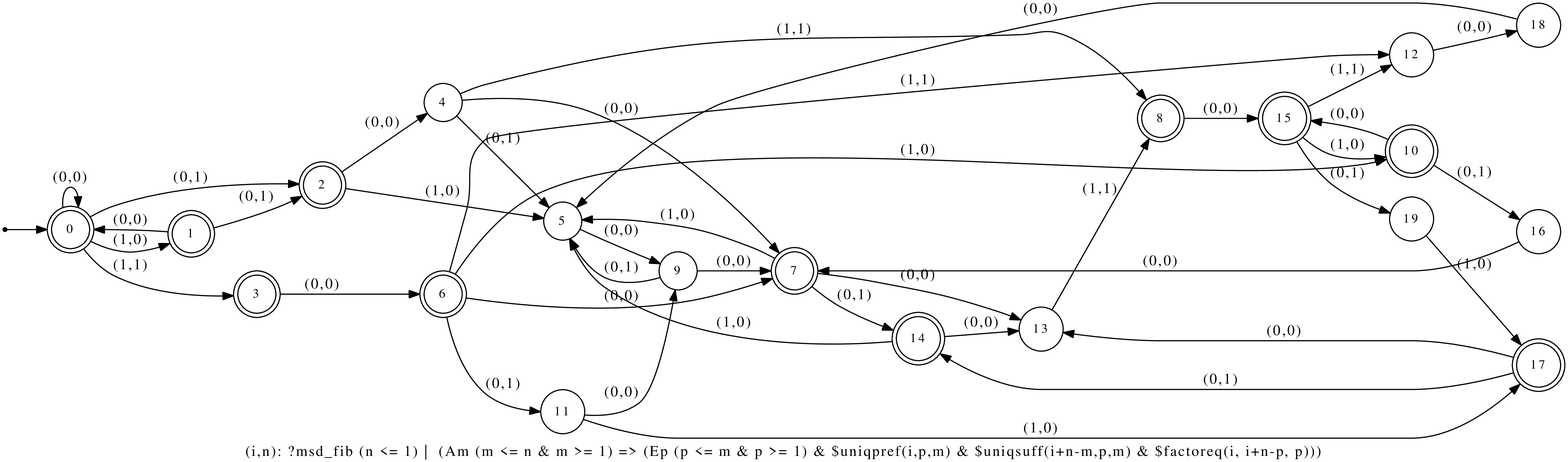}}
\caption{Automaton for privileged factors of the Fibonacci
word}
\protect\label{fib2priv}
\end{figure}

We now turn to recovering some of the results of \cite{Peltomaki:2015}
on the number $a(n)$ of privileged factors of the Thue-Morse sequence.
Here are the first few values of this sequence
\begin{center}
\begin{tabular}{c|ccccccccccccccccc}
$n$ & 0 & 1 & 2 & 3 & 4 & 5 & 6 & 7 & 8 & 9 & 10 & 11 & 12 & 13 & 14 & 15 & 16 \\
\hline
$a(n)$ & 1 & 2 & 2 & 2 & 2 & 0 & 4 & 0 & 8 & 0 & 8 & 0 & 4 & 0 & 0 & 0 &  0 
\end{tabular}
\end{center}

As we did above for closed words, we first make an automaton
for the first occurrences of each privileged factor of length
$n$.  We then convert this to a linear representation
$(v, \mu, w)$, obtaining

$$M_0 = 
\left[ 
\scriptsize{\begin{array}{cccccccccccccccccccccccccccccc}
1&1&0&0&0&0&0&0&0&0&0&0&0&0&0&0&0&0&0&0&0&0&0&0&0&0&0&0&0&0\\
0&0&0&0&1&0&0&0&0&0&0&0&0&0&0&0&0&0&0&0&0&0&0&0&0&0&0&0&0&0\\
0&0&0&0&0&0&0&1&1&0&0&0&0&0&0&0&0&0&0&0&0&0&0&0&0&0&0&0&0&0\\
0&0&0&0&0&0&0&0&0&0&1&1&0&0&0&0&0&0&0&0&0&0&0&0&0&0&0&0&0&0\\
0&0&0&0&0&0&0&0&0&0&0&0&0&0&0&0&0&0&0&0&0&0&0&0&0&0&0&0&0&0\\
0&0&0&0&0&0&0&0&0&0&0&0&0&0&0&0&1&1&0&0&0&0&0&0&0&0&0&0&0&0\\
0&0&0&0&0&0&0&0&0&0&0&0&0&0&0&0&0&0&0&0&1&0&0&0&0&0&0&0&0&0\\
0&0&0&0&0&0&0&0&0&0&0&0&0&0&0&0&0&0&0&0&1&1&0&0&0&0&0&0&0&0\\
0&0&0&0&0&0&0&0&0&0&0&0&0&0&0&0&0&0&0&0&0&0&1&1&0&0&0&0&0&0\\
0&0&0&0&0&0&0&0&0&0&0&0&0&0&0&0&0&0&0&0&0&0&0&0&1&0&0&0&0&0\\
0&0&0&0&0&0&0&0&0&0&0&0&0&0&0&0&0&0&0&0&0&0&0&0&1&1&0&0&0&0\\
0&0&0&0&1&0&0&0&0&0&0&0&0&0&0&0&0&0&0&0&0&0&0&0&1&0&0&0&0&0\\
0&0&0&0&0&0&0&1&1&0&0&0&0&0&0&0&0&0&0&0&0&0&0&0&0&0&0&0&0&0\\
0&0&0&0&0&0&0&0&0&0&0&0&0&0&0&0&0&0&0&0&0&0&1&0&0&0&0&0&0&0\\
0&0&0&0&0&0&0&0&0&0&0&0&0&0&0&0&0&0&0&0&0&0&0&0&0&0&0&0&0&0\\
0&0&0&0&1&0&0&0&0&0&0&0&0&0&0&0&0&0&0&0&0&0&0&0&0&0&0&0&0&0\\
0&0&0&0&0&0&0&0&0&0&0&0&0&0&0&0&0&0&0&0&1&0&0&1&0&0&0&0&0&0\\
0&0&0&0&0&1&0&0&0&0&0&0&0&0&0&0&0&0&0&0&0&0&1&0&0&0&0&0&0&0\\
0&0&0&0&0&0&0&0&0&0&0&0&0&0&0&0&0&0&0&0&0&0&0&0&1&0&0&0&0&0\\
0&0&0&0&0&0&1&0&0&0&0&0&0&0&0&0&0&0&0&0&0&0&0&0&0&0&0&0&1&0\\
0&0&0&0&0&0&1&0&0&0&0&0&0&0&0&0&0&0&0&0&0&0&0&0&0&0&0&0&0&0\\
0&0&0&0&0&0&0&0&1&0&0&0&0&0&0&0&1&0&0&0&0&0&0&0&0&0&0&0&0&0\\
0&0&0&0&0&0&0&0&0&0&0&0&0&0&0&0&0&0&0&0&0&0&0&0&0&0&0&0&0&1\\
0&0&0&0&0&0&0&0&1&0&0&0&0&0&0&0&1&0&0&0&0&0&0&0&0&0&0&0&0&0\\
0&0&0&0&0&0&0&0&0&0&0&0&0&0&0&0&0&0&0&0&0&0&0&0&0&0&0&0&1&0\\
0&0&0&0&0&0&0&0&0&0&0&0&0&0&0&0&0&0&0&0&0&0&0&0&0&0&0&0&0&0\\
0&0&0&0&0&0&0&0&1&0&0&0&0&0&0&0&1&0&0&0&0&0&0&0&0&0&0&0&0&0\\
0&0&0&0&0&0&0&0&0&0&0&0&0&0&0&0&0&0&0&0&0&0&0&0&1&1&0&0&0&0\\
0&0&0&0&0&0&0&0&0&0&0&0&0&0&0&0&0&0&0&0&0&0&1&0&0&0&0&0&0&0\\
0&0&0&0&0&0&0&0&0&0&0&0&0&0&0&0&0&0&0&0&0&0&0&0&0&0&0&0&0&0\\
\end{array}}
\right]$$

$$M_1 = 
\left[ 
{\scriptsize \begin{array}{cccccccccccccccccccccccccccccc}
0&0&1&1&0&0&0&0&0&0&0&0&0&0&0&0&0&0&0&0&0&0&0&0&0&0&0&0&0&0\\
0&0&0&0&0&1&1&0&0&0&0&0&0&0&0&0&0&0&0&0&0&0&0&0&0&0&0&0&0&0\\
0&0&0&0&0&0&0&0&0&1&0&0&0&0&0&0&0&0&0&0&0&0&0&0&0&0&0&0&0&0\\
0&0&0&0&0&0&0&0&0&0&0&0&1&1&0&0&0&0&0&0&0&0&0&0&0&0&0&0&0&0\\
0&0&0&0&0&0&0&0&0&0&0&0&0&0&1&1&0&0&0&0&0&0&0&0&0&0&0&0&0&0\\
0&0&0&0&0&0&0&0&0&0&0&0&0&0&0&0&0&0&1&1&0&0&0&0&0&0&0&0&0&0\\
0&0&0&0&0&0&0&0&0&0&0&0&0&0&0&0&0&0&0&1&0&0&0&0&0&0&0&0&0&0\\
0&0&0&0&0&0&0&0&0&0&0&0&0&0&0&0&0&0&0&1&0&0&1&0&0&0&0&0&0&0\\
0&0&0&0&0&0&0&0&0&0&0&0&0&0&0&0&0&0&0&0&0&0&1&0&0&0&0&0&0&0\\
0&0&0&0&0&0&0&0&0&0&0&0&0&0&1&1&0&0&0&0&0&0&0&0&0&0&0&0&0&0\\
0&0&0&0&0&0&0&0&0&0&0&0&0&0&0&1&0&0&0&0&0&0&0&0&0&0&1&0&0&0\\
0&0&0&0&0&0&0&0&0&0&0&0&0&0&0&0&0&0&0&0&0&0&0&1&0&0&0&0&0&0\\
0&0&0&0&0&0&0&0&0&0&0&0&0&0&0&0&0&0&1&0&0&0&0&0&0&0&0&0&0&0\\
0&0&0&0&0&0&0&0&0&0&0&0&0&0&1&1&0&0&0&0&0&0&0&0&0&0&0&0&0&0\\
0&0&0&0&1&0&0&0&0&0&0&0&0&0&0&0&0&0&0&0&0&0&0&0&0&0&0&0&0&0\\
0&0&0&0&1&0&0&0&0&0&0&0&0&0&0&0&0&0&0&0&0&0&0&1&0&0&0&0&0&0\\
0&0&0&0&0&0&0&0&0&0&0&0&0&0&0&0&0&0&0&1&0&0&1&0&0&0&0&0&0&0\\
0&0&0&0&0&0&0&0&0&0&0&0&0&0&0&0&0&0&0&0&0&0&0&0&0&0&0&1&0&0\\
0&0&0&0&0&0&0&0&0&0&0&0&0&0&0&0&0&0&0&0&0&0&0&0&0&0&0&0&0&0\\
0&0&0&0&0&0&0&0&0&0&0&1&0&0&0&0&0&0&0&0&0&0&0&0&0&0&0&0&0&0\\
0&0&0&0&0&0&0&0&0&0&0&1&0&0&0&0&0&0&0&0&0&0&0&0&0&0&0&0&0&0\\
0&0&0&0&0&0&0&0&0&0&0&0&0&0&0&0&0&0&1&0&0&0&0&0&0&0&0&0&0&0\\
0&0&0&0&0&0&0&0&0&0&0&0&0&0&0&0&0&0&0&0&0&0&0&0&0&0&0&0&0&0\\
0&0&0&0&0&0&0&0&0&0&0&0&0&0&0&0&0&0&1&0&0&0&0&0&0&0&0&0&0&0\\
0&0&0&0&0&0&0&0&0&0&0&0&0&0&0&0&0&0&0&0&0&0&0&0&0&0&0&0&0&0\\
0&0&0&0&0&0&0&0&0&0&0&0&0&0&1&1&0&0&0&0&0&0&0&0&0&0&0&0&0&0\\
0&0&0&0&0&0&0&0&0&1&0&0&0&0&0&0&0&0&0&0&0&0&0&0&0&0&0&0&0&0\\
0&0&0&0&0&0&0&0&0&0&0&0&0&0&0&0&0&0&0&0&0&0&0&1&0&0&0&0&0&0\\
0&0&0&0&0&0&0&0&0&0&0&0&0&0&0&0&0&0&0&0&0&0&0&0&0&0&0&0&0&0\\
0&0&0&0&0&0&0&0&0&0&0&0&0&0&0&0&0&0&0&0&0&0&0&0&0&0&0&0&0&0\\
\end{array}}
\right]$$

$$v = 
\left[ 
{\scriptsize
\begin{array}{cccccccccccccccccccccccccccccc}
1&1&0&0&1&0&0&0&0&0&0&0&0&0&0&0&0&0&0&0&0&0&0&0&0&0&0&0&0&0
\end{array}
}\right]$$

$$
w =
\left[
{\scriptsize
\begin{array}{cccccccccccccccccccccccccccccc}
1&0&1&1&0&0&0&0&1&0&0&0&1&1&0&0&0&1&0&0&0&1&0&0&0&1&0&0&1&1
\end{array} 
}\right]$$

We can then obtain relations for the sequence $(a(n))_{\geq 0}$:

\begin{align*}
a(4n+3) &= a(4n+1) \\
a(8n+1) &=  a(4n+1) \\
a(8n+5) &= 0 \\
a(16n+6) &= a(4n+1) + a(4n+2) - {1\over 2} a(16n+2) + {1 \over 2} a(16n+4) \\
a(16n+8) &= 3 a(4n+1) + 3 a(4n+2) - {1 \over 2}  a(16n+2) - {3 \over 2} a(16n+4) \\
a(16n+10) &= 3 a(4n+1) + 3 a(4n+2) - {1 \over 2} a(16n+2) - {3 \over 2}  a(16n+4) \\
a(16n+12) &= a(4n+1) + a(4n+2) - {1 \over 2}  a(16n+2) + {1 \over 2} a(16n+4) \\
\end{align*}
\begin{align*}
a(32n) &= a(2n+1) - {1 \over 2}  a(4n+1) + 3 a(8n+2) - 3  a(8n+4) \\
a(32n+2) &= -ra(2n+1) +  a(4n+1) + 3 a(8n+2) -2 a(8n+4) \\
a(32n+4) &= -a(2n+1) + a(4n+1) + a(8n+2) \\
a(32n+14) &= -a(2n+1) + a(8n+4) \\
a(32n+16) &= - a(2n+1) + a(8n+4) \\
a(32n+20) &= a(32n+18) \\
a(32n+30) &= 2 a(2n+1) + a(8n+2) - 3 a(8n+4) + 2 a(8n+6) - a(32n+18) \\
a(64n+18) &= a(4n+1) \\
a(64n+50) &=  0
\end{align*}

We can also do the same thing for the number of  privileged palindromes 
$(b(n))_{n \geq 0}$
in the Thue-Morse sequence.  Here are the first few values:
\begin{center}
\begin{tabular}{c|ccccccccccccccccc}
$n$ & 0 & 1 & 2 & 3 & 4 & 5 & 6 & 7 & 8 & 9 & 10 & 11 & 12 & 13 & 14 & 15 & 16 \\
\hline
$b(n)$ & 1 & 2 & 2 & 2 & 2 & 0 & 4 & 0 & 4 & 0 & 4 & 0 & 4 & 0 & 0 & 0 & 0
\end{tabular}
\end{center}
We omit the details and just present the 
computed relations:

\begin{align*}
b(4n+3) &= b(4n+1) \\
b(8n+1) &= b(4n+1) \\
b(8n+4) &= b(8n+2) \\
b(8n+5) &= 0 \\
b(16n+6) &= b(4n+1) +  b(4n+2) \\
b(16n+8) &= b(4n+1) + b(4n+2) \\
b(16n+10) &= b(4n+1) + b(4n+2) \\
b(16n+14) &= -b(4n+1) + b(16n+2) \\
b(32n) &= b(2n+1) - {1\over 2 } b(4n+1) \\
b(32n+2) &= -b(2n+1) + b(4n+1) + b(8n+2) \\
b(32n+16) &= - b(2n+1) +  b(8n+2) \\
b(64n+18) &=  b(4n+1) \\
b(64n+50) &= 0
\end{align*}

\section{Trapezoidal words}

Trapezoidal words have many different characterizations.  The
characterization that proves useful to us is the following
\cite[Prop.~2.8]{Bucci&DeLuca&Fici:2013}:
a word $w$ is trapezoidal iff $|w| = R_w + K_w$.
Here $R_w$ is the minimal length $\ell$ for which $w$ 
contains no right-special factor of length $\ell$, and
$K_w$ is the minimal length $\ell$ for which there is a 
length-$\ell$ suffix of $w$ that appears nowhere else in $w$.

This can be translated into $\Th(\Enn, +, n \rightarrow {\bf x}[n])$
as follows:
$\rtsp(j,n,p)$ is true iff ${\bf x}[j..j+n-1]$ has a right
special factor of length $p$, and false otherwise:
\begin{multline*}
 \rtsp(j,n,p) := \exists r\ \exists s\ (\subs(r,j,p+1,n) \ \wedge \  
\subs(s,j,p+1,n) \ \wedge\ \\
\factoreq(r,s,p) \ \wedge\  {\bf x}[s+p] \not=
{\bf x}[r+p]).
\end{multline*}

$\minr(j,n,p)$ is true iff $p$ is the smallest integer
such that ${\bf x}[j..j+n-1]$ has no right special
factor of length $p$:
$$ \minr(j,n,p) := (\neg \rtsp(j,n,p)) \ \wedge\ (\forall c \ (\neg \rtsp(j,n,c)) 
	\implies (c \geq p)) .$$

$\unrepsuf(j,n,q)$ is true iff the suffix of length
$q$ of ${\bf x}[j..j+n-1]$ is unrepeated in 
${\bf x}[j..j+n-1]$:
$$ \unrepsuf(j,n,q) :=
\neg \occurs(j+n-q,j,q,n-1).
$$

$\minunrepsuf(j,n,p)$ is true iff $p$ is the length of
the shortest unrepeated suffix of
${\bf x}[j..j+n-1]$:
$$ \minunrepsuf(j,n,p) := \unrepsuf(j,n,q) \ \wedge\ 
(\forall c\ \unrepsuf(j,n,c) \implies (c\geq q)) .$$

$\trap(j,n)$ is true iff ${\bf x}[j..j+n-1]$ is trapezoidal:
$$
\trap(j,n) := \exists p \ \exists q\ (n=p+q) 
\ \wedge \ \minunrepsuf(j,n,p) \ \wedge\ \minr(j,n,q) .$$

Finally, we can determine those $n$ for which
$\bf x$ has a trapezoidal factor of length $n$ as follows:
$$ \exists j \ \trap(j,n).$$

\begin{theorem}
\begin{enumerate}[(a)]
\item There are exactly 43 trapezoidal factors of the Thue-Morse
sequence.  The longest is of length $8$.

\item There are exactly 185 trapezoidal factors of the Rudin-Shapiro
sequence.  The longest is of length $12$.

\item There are exactly 57 trapezoidal factors of the ordinary 
paperfolding sequence.  The longest is of length $8$.

\item There are exactly 77 trapezoidal factors of the period-doubling
sequence.  The longest is of length $15$.

\item Every factor of the Fibonacci word is trapezoidal.

\end{enumerate}
\end{theorem}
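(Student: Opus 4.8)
The plan is to reduce all five parts to decidable first-order assertions and to invoke the decision procedure (via Walnut) on the predicate $\trap(j,n)$ constructed above, which asserts that ${\bf x}[j..j+n-1]$ is trapezoidal via the characterization $|w| = R_w + K_w$ of \cite{Bucci&DeLuca&Fici:2013}. Since $\trap(j,n)$ is expressible in $\Th(\Enn,+,n\rightarrow {\bf x}[n])$, the set $\{n : \exists j\ \trap(j,n)\}$ is recognized by a finite automaton, and every statement in the theorem can be read off from this automaton and from one closely related to it.

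For the finiteness claims in (a)--(d), I would first build the automaton $A$ recognizing the canonical base-$2$ representations of $\{n : \exists j\ \trap(j,n)\}$ and verify that this set of integers is finite; concretely, one checks that every directed cycle of $A$ lies off every path from the start state to an accepting state, so that only finitely many values of $n$ are accepted. The largest such $n$ then gives the maximal length of a trapezoidal factor (namely $8$, $12$, $8$, and $15$ for the four sequences). To count the distinct trapezoidal factors, I would, in exact analogy with the predicate $\ucf$ used for closed words, pick out leftmost occurrences via
$$ \trap(j,n) \ \wedge\ \neg\occurs(j,0,n,j+n-1), $$
which holds precisely when ${\bf x}[j..j+n-1]$ is trapezoidal and this is its first occurrence. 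Since the lengths are bounded by the values just computed, there are only finitely many distinct trapezoidal factors, each contributing exactly one such pair $(j,n)$; enumerating the solutions length by length and summing yields the totals $43$, $185$, $57$, and $77$.

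For part (e), the assertion is the universally quantified sentence $\forall i\ \forall n\ \trap(i,n)$, interpreted over the Fibonacci word using its Zeckendorf representation; equivalently one verifies that $\exists i\ \exists n\ \neg\trap(i,n)$ is unsatisfiable. Because the Fibonacci word is Fibonacci-automatic, the same machinery applies, and the decision procedure is expected to return ``true,'' establishing that every factor of the Fibonacci word is trapezoidal.

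The main obstacle I anticipate is purely computational rather than conceptual. The predicate $\trap$ is assembled from several layers of quantifier alternation --- $\rtsp$ carries an $\exists r\,\exists s$, while $\minr$ and $\minunrepsuf$ each add a $\forall c$, and the outermost level adds $\exists p\,\exists q$ --- so the intermediate automata produced during the determinization steps can grow rapidly. This is likely to be most severe for the Rudin-Shapiro sequence, whose underlying automaton is already the most intricate of the five, which is consistent with its yielding by far the largest count. The substance of the proof therefore lies in ensuring these computations complete and then verifying, by inspection of the resulting automata, the finiteness, the maximal lengths, and the exact counts claimed; the correctness of the reduction itself is guaranteed by the trapezoidal characterization already encoded in $\trap$.
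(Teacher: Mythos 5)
Your proposal matches the paper's approach: the paper likewise proves this theorem by running the decision procedure (Walnut) on the predicate $\trap(j,n)$ built from $\rtsp$, $\minr$, and $\minunrepsuf$, reading off finiteness, maximal lengths, and counts from the resulting automata, and handling the Fibonacci case via Zeckendorf (Fibonacci-automatic) representations. Your anticipated computational obstacle is also exactly what the paper encountered; it resolves it by switching to least-significant-digit-first representations for the Rudin-Shapiro and paperfolding cases to make the computation terminate.
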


For parts (b) and (c) above, we used the least-significant-digit first
representation in order to have the computation terminate.

\section{Balanced words}

Our definition of balanced word above does not obviously lend itself to a
definition in first-order arithmetic.  However, for binary words, there
is an alternative characterization (due to Coven and Hedlund
\cite{Coven&Hedlund:1973}) that we can use:  a binary word $w$ is
unbalanced if and only if there exists a palindrome $v$ such that both
$0v0$ and $1v1$ are factors of $w$.

Thus we can write define $\unbal(i,n)$, a predicate which is true iff
${\bf x}[i..i+n-1]$ is unbalanced, as follows:
\begin{multline*}
\exists m \ (m \geq 2) \ \wedge\  (\exists j \ \exists k \ 
(\subs(j,i,m,n) \ \wedge\  \subs(k,i,m,n) \ \wedge \ \pal(j,m) \\
\ \wedge \ \pal(k,m) \ \wedge\ \factoreq(j+1,k+1,m-2) \ \wedge \ 
{\bf x}[j] \not= {\bf x}[k])) 
\end{multline*}

\begin{theorem}
\begin{enumerate}[(a)]
\item The Thue-Morse word has exactly 41 balanced factors.  The longest
is of length 8.  The Thue-Morse word has unbalanced factors of length
$n$ exactly when $n \geq 4$.

\item The Rudin-Shapiro word has exactly 157 balanced factors.  The longest
is of length 12.  The Rudin-Shapiro word has unbalanced factors of
length $n$ exactly when $n \geq 4$.

\item The ordinary paperfolding word has exactly 51 balanced factors.
The longest is of length 8.  The ordinary paperfolding word has unbalanced
factors of length $n$ exactly when $n \geq 4$.

\item The period-doubling word has exactly 69 balanced factors.
The longest is of length 15.  The period-doubling word has unbalanced
factors of length $n$ exactly when $n \geq 6$.

\item All factors of the Fibonacci word are balanced.
\end{enumerate}
\end{theorem}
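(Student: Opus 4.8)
The plan is to reduce each claim to a first-order sentence over $\Th(\Enn, +, n \rightarrow {\bf x}[n])$ and discharge it with Walnut, exactly as for the earlier classes. Since all five sequences are binary, the Coven--Hedlund characterization encoded in $\unbal(i,n)$ above applies verbatim, so a factor ${\bf x}[i..i+n-1]$ is balanced precisely when $\neg\unbal(i,n)$ holds. I would first build, for each sequence, the automaton for the predicate $\exists i \ \unbal(i,n)$, which recognizes exactly those lengths $n$ admitting an unbalanced factor. Reading off the accepted set and checking that it equals $\{n : n \geq 4\}$ (resp.\ $\{n : n \geq 6\}$ for the period-doubling word) settles the ``unbalanced factors of length $n$ exactly when\ldots'' assertions in (a)--(d); for the Fibonacci word (e) the corresponding sentence $\forall i \ \forall n \ \neg\unbal(i,n)$ should simply evaluate to true in the Zeckendorf framework, recovering the classical fact that the Sturmian Fibonacci word is balanced.

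For the finite counts I would proceed in two stages. First, to pin down the longest balanced factor I would verify for each sequence that no balanced factor exceeds the claimed length $L$, i.e.\ that the sentence $\exists i \ \neg\unbal(i,n)$ fails for every $n > L$; concretely one checks that the automaton for $\exists i \ \neg\unbal(i,n)$ accepts only finitely many $n$, all at most $L$. This establishes that the set of balanced factors is finite. Second, to count distinct balanced factors I would introduce a first-occurrence predicate in the style of $\ucf$, namely $\neg\unbal(i,n) \wedge \neg\occurs(i,0,n,i+n-1)$, which holds exactly when ${\bf x}[i..i+n-1]$ is balanced and is its own leftmost occurrence. Since the supporting length set is finite, enumerating the accepting pairs (or, equivalently, summing the regular counting function obtained from the associated linear representation, as in \cite{Goc&Mousavi&Shallit:2013}) over the finitely many relevant $n$ yields the totals $41$, $157$, $51$, and $69$.

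The main obstacle I anticipate is twofold. Logically, the delicate point is the finiteness/maximal-length step: ``exactly $N$ balanced factors'' becomes a meaningful finite count only after one has separately certified an upper bound on the length, and this bound must be extracted from the automaton for $\exists i \ \neg\unbal(i,n)$ rather than from the existence-of-an-unbalanced-factor automaton, since the two are \emph{not} complementary---a single length can simultaneously admit balanced and unbalanced factors. Computationally, the obstacle is making Walnut terminate: as already noted for the trapezoidal predicates, for the Rudin--Shapiro and paperfolding sequences the most-significant-digit-first construction may not halt, so I would run the balanced predicates using the least-significant-digit-first representation for parts (b) and (c). Once termination is secured, the remaining verification of the counts and thresholds is routine inspection of the resulting automata.
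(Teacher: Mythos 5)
Your proposal is correct and takes essentially the same route as the paper: the paper also reduces everything to the Coven--Hedlund predicate $\unbal(i,n)$ and discharges all five parts by machine computation with {\tt Walnut}, with the exact counts obtained via first-occurrence (novel-occurrence) predicates and the associated linear representations, just as in its sections on closed and privileged words. Your additional observations---that the balanced-length and unbalanced-length automata are not complements and so must be computed separately, and that least-significant-digit-first representations may be needed for termination---are sound refinements fully consistent with the paper's methodology.
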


Of course, (e) was already well known.

\section{Consequences}

As a consequence we get

\begin{theorem}
Suppose ${\bf x}$ is a $k$-automatic sequence.
Then
\begin{enumerate}[(a)]
\item The characteristic sequence of those $n$ for which
${\bf x}$ has a closed (resp., palindromic, maximal
palindromic, privileged,
rich, trapezoidal, balanced) factor
of length $n$ is $k$-automatic.

\item The sequence counting the number of closed (resp., 
palindromic, maximal palindromic,
privileged, rich, trapezoidal, balanced) factors of length $n$ is $k$-regular.

\item It is decidable, given a $k$-automatic sequence, whether
it contains arbitrarily long closed (resp., palindromic, maximal
palindromic, privileged, rich,
trapezoidal, balanced) factors.

\item There exists a function $g(k,\ell,n)$
such that if a $k$-automatic sequence $\bf w$
taking values over an alphabet of size $\ell$, generated by an
$n$-state automaton, has at least one closed
(resp., palindromic, maximal palindromic, privileged,
rich, trapezoidal, balanced) factor, then it has a factor of
length $\leq g(k,\ell,n)$.  The function $g$ does not depend on $\bf w$.

\item There exists a function $h(k,\ell,n)$
such that if a $k$-automatic sequence $\bf w$
taking values over an alphabet of size $\ell$, generated by an
$n$-state automaton, has a closed
(resp., palindromic, maximal palindromic, privileged,
rich, trapezoidal, balanced) factor of length $\geq h(k,\ell,n)$,
then it has arbitrarily large such factors.
The function $h$ does not depend on $\bf w$.
\end{enumerate}
\end{theorem}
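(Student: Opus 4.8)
The plan is to derive all five parts from two ingredients already in hand. The first is the collection of first-order predicates $\closed$, $\pal$, $\maxpal$, $\priv$ (equivalently $\privtwo$), $\rich$, $\trap$, and $\unbal$ built in the preceding sections, each of which expresses the corresponding property of the factor ${\bf x}[i..i+n-1]$ inside the theory $\Th(\Enn, +, n\rightarrow {\bf x}[n])$. The second is the standard fact that for a $k$-automatic sequence $\bf x$ this theory is decidable and, more strongly, every formula $\varphi(n_1,\ldots,n_r)$ in its language defines a $k$-recognizable relation, with an automaton for the base-$k$ representations of the satisfying tuples constructible effectively from $\varphi$ and from a DFAO generating $\bf x$. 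Once this is in place, (a)--(c) are essentially immediate, and only the uniformity assertions (d)--(e) require care.

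For part (a), I would observe that ``$\bf x$ has a [property] factor of length $n$'' is the one-variable formula $\exists i\ \Phi(i,n)$, where $\Phi$ is the relevant predicate; for the balanced case one takes $\exists i\ \neg\unbal(i,n)$ together with the requirement that $i$ begin a length-$n$ factor, which holds for every $n$ since $\bf x$ is infinite. Hence the set of valid $n$ is $k$-recognizable and its characteristic sequence is $k$-automatic. For part (b), I would count distinct factors by counting their first occurrences, exactly as was done for $\ucf$ in the closed-word section: writing $\Psi(i,n) := \Phi(i,n) \wedge \neg\occurs(i,0,n,i+n-1)$ for ``${\bf x}[i..i+n-1]$ is a [property] factor occurring for the first time at $i$'', the quantity $f(n) = \#\{i : \Psi(i,n)\}$ is the number of distinct [property] factors of length $n$. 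This count is finite for each $n$ because $\bf x$ has only finitely many factors of each length, each with a unique first occurrence, and the linear-representation construction of \cite{Goc&Mousavi&Shallit:2013} then exhibits $f$ as a $k$-regular sequence. For part (c), ``$\bf x$ has arbitrarily long [property] factors'' is the sentence $\forall N\ \exists n\ \exists i\ (n>N) \wedge \Phi(i,n)$, which is decidable because the whole theory is.

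The genuinely new content, and where I expect the main difficulty, lies in the uniformity claims (d) and (e): that the bounds do not depend on $\bf w$. The key point is that the defining formula $\exists i\ \Phi(i,n)$ is fixed once and for all, independent of $\bf w$; only the atomic automata encoding the letter comparisons ${\bf x}[i]={\bf x}[j]$ and ${\bf x}[i]=a$ vary, and their sizes are bounded by a function of $k$, the alphabet size $\ell$, and the number of states $n$ of the DFAO for $\bf w$. Since each logical operation (Boolean combination, and existential quantification via projection and determinization) inflates automaton size by an amount bounded in terms of the current size, and the formula has fixed quantifier depth and a fixed number of connectives, the automaton $A$ recognizing the set of valid lengths has at most $N = N(k,\ell,n)$ states, a bound independent of $\bf w$. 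Part (d) then follows from the elementary fact that a nonempty regular language contains a word of length below its number of states, so a nonempty set of valid lengths contains some $n < k^{N}$; one takes $g(k,\ell,n)=k^{N(k,\ell,n)}$. Part (e) follows from the complementary pumping fact: if $A$ accepts a word of length at least $N$ it accepts infinitely many words of unbounded length, so if the set of valid lengths contains an element $\geq k^{N}$ it is infinite; one takes $h(k,\ell,n)=k^{N(k,\ell,n)}$.

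The one thing to be careful about, and the crux of the argument, is making the size bound $N(k,\ell,n)$ genuinely uniform, that is, verifying that none of the intermediate automaton constructions smuggles in a dependence on $\bf w$ beyond $(k,\ell,n)$. Because the formula is fixed and the only $\bf w$-dependent inputs are the atomic automata (whose sizes are captured by $k$, $\ell$, and $n$), this reduces to a finite, mechanical check that the effective construction of \cite{Goc&Mousavi&Shallit:2013} produces output whose size is bounded by a computable function of its input sizes. I do not expect any conceptual obstacle here, only the need to state the bound explicitly; all the substantive work, namely the first-order expressibility of each property, has already been carried out in the earlier sections.
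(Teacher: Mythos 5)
Your proposal is correct and takes essentially the same route as the paper: the paper's own proof disposes of (a)--(c) by citing \cite[Theorem 1]{Shallit:2013} and \cite{Charlier&Rampersad&Shallit:2012} --- precisely the expressibility-plus-effective-automaton-construction machinery you invoke, applied to the predicates built in the earlier sections --- and justifies (d)--(e) by the same observation you make, namely that converting the fixed logical predicate to an automaton yields a state bound depending only on $(k,\ell,n)$, after which the pumping arguments are standard. Your write-up simply fills in details that the paper delegates to citations.
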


\begin{proof}
Parts (a) and (c) follow from, for example, \cite[Theorem 1]{Shallit:2013}.
For part (b) see \cite{Charlier&Rampersad&Shallit:2012}.
Parts (d) and (e) follows from the construction converting the logical
predicate for the property to an automaton.
\end{proof}

%
%
%

\end{document}